\newcommand{\NP}{{\sf NP}}
\newcommand{\ssi}{\subseteq_i}
\newcommand{\li}{\subseteq_{li}}
\title{Classifying the Clique-Width\\ of $H$-Free Bipartite Graphs\thanks{The research in this paper was supported by EPSRC (EP/G043434/1 and EP/K025090/1) and ANR (TODO ANR-09-EMER-010).}} 
\author{Konrad K. Dabrowski\inst{1} \and Dani\"el Paulusma\inst{1}}
\institute{
School of Engineering and  Computing Sciences, Durham University,\\
Science Laboratories, South Road,
Durham DH1 3LE, United Kingdom
\texttt{\{konrad.dabrowski,daniel.paulusma\}@durham.ac.uk}
}
\begin{document}
\maketitle
\begin{abstract}
Let $G$ be a bipartite graph, and let $H$ be a bipartite graph with a fixed bipartition $(B_H,W_H)$.
We consider three different, natural ways of forbidding $H$ as an induced subgraph in $G$. 
First, $G$ is $H$-free if it does not contain $H$ as an induced subgraph.
Second,  $G$ is strongly $H$-free if $G$ is $H$-free or else has no bipartition $(B_G,W_G)$ 
with  $B_H\subseteq B_G$ and $W_H\subseteq W_G$.
Third, $G$ is weakly $H$-free if $G$ is $H$-free or else has at least one bipartition $(B_G,W_G)$ 
with $B_H\not\subseteq B_G$ or $W_H\not\subseteq W_G$.
Lozin and Volz characterized all bipartite graphs $H$ for which the class of strongly $H$-free bipartite graphs 
has bounded clique-width. 
We extend their result by giving complete classifications for the other two variants of $H$-freeness.
\end{abstract}

\section{Introduction}

The {\em clique-width} of a graph $G$, is a well-known graph parameter that has been studied both in a structural and in an algorithmic context. It is the minimum 
number of labels needed to construct $G$ by
using the following four operations:
\begin{enumerate}
\item[(i)] creating a new graph consisting of a single vertex $v$ with label $i$;
\item[(ii)] taking the disjoint union of two labelled graphs $G_1$ and $G_2$;
\item[(iii)] joining each vertex with label $i$ to each vertex with label $j$ ($i\neq j$);
\item[(iv)] renaming label $i$ to $j$.
\end{enumerate}
We refer to the surveys of Gurski~\cite{Gu07} and Kami\'nski, Lozin and Milani\v{c} \cite{KLM09}
for an in-depth study of the properties of clique-width. 

We say that a class of graphs has {\em bounded} clique-width if every graph from the class has clique-width at most $p$ for some constant $p$.
As many \NP-hard graph problems can be solved in polynomial time on graph classes of bounded
clique-width~\cite{CMR00,KR03b,Oum08,Ra07}, it is natural to determine whether a certain graph class has bounded clique-width and to find
new graph classes of bounded clique-width. In particular, many papers determined the clique-width of graph classes characterized by one or more forbidden induced subgraphs~\cite{BL02,BELL06,BGMS14,BKM06,BK05,BLM04b,BLM04,BM02,BM03,DGP13,LR04,LR06,LV08,MR99}.

In this paper we focus on classes of bipartite graphs characterized by a forbidden induced subgraph $H$. A 
graph $G$ is $H$-free if it does not contain $H$ as an induced subgraph. If $G$ is bipartite, then when considering notions for $H$-freeness, we may assume without loss of generality that $H$ is bipartite as well.
For bipartite graphs, the situation is more subtle as one can define the notion of freeness with respect to a fixed bipartition $(B_H,W_H)$ of~$H$. 
This leads to two other notions (also see Section~\ref{s-preliminaries} for formal definitions). We say that a bipartite graph $G$ is strongly $H$-free if $G$ is $H$-free or else has no bipartition $(B_G,W_G)$ 
with  $B_H\subseteq B_G$ and $W_H\subseteq W_G$.
Strongly $H$-free graphs have been studied with respect to their clique-width, although under less explicit  terminology (see e.g. \cite{KLM09,LR06,LV08}).
In particular, Lozin and Volz~\cite{LV08} completely determined those bipartite graphs $H$, for which the class of strongly  $H$-free graphs has bounded clique-width (we give an exact statement of their result in Section~\ref{s-main}).
If $G$ is $H$-free or else has at least one bipartition $(B_G,W_G)$  with $B_H\not\subseteq B_G$ or $W_H\not\subseteq W_G$, then $G$ is said to be weakly $H$-free. As far as we are aware this notion has not been studied with respect to the clique-width of bipartite graphs.

\medskip
\noindent
{\bf Our Results:}
We completely classify the classes of $H$-free bipartite graphs of bounded clique-width. 
We also introduce the notion of weakly $H$-freeness for bipartite graphs
and characterize those classes of weakly $H$-free bipartite graphs that have bounded clique-width.
In this way, we have identified a number of new graph classes of bounded clique-width.
Before stating our results precisely in Section~\ref{s-main}, we first give some terminology and examples in Section~\ref{s-preliminaries}.
In Section~\ref{s-proofs} we give the proofs of our results.

\section{Terminology and Examples}\label{s-preliminaries}

We first give some terminology on general graphs, followed by terminology for bipartite graphs. We illustrate the definitions of $H$-freeness,
strong $H$-freeness and weak $H$-freeness of bipartite graphs with some examples. 
As we will explain, these examples also make clear that all three notions are different from each other.

\medskip
\noindent
{\bf General graphs:} Let $G$ and $H$ be graphs.
We write $H\ssi G$ to indicate that $H$ is an induced subgraph of $G$.
A bijection of the vertices $f:V_G\to V_H$ is called a ({\it graph}) {\it isomorphism} when $uv\in E_G$ if and only if $f(u)f(v)\in E_H$.
If such a bijection exists then $G$ and $H$ are {\it isomorphic}. 
Let  $\{H_1,\ldots,H_p\}$ be a set of graphs. 
A graph $G$ is {\it $(H_1,\ldots,H_p)$-free} if no $H_i$ is an induced subgraph of $G$.
If $p=1$ we may write $H_1$-free instead of $(H_1)$-free.
The {\it disjoint union} $G+H$ of two vertex-disjoint graphs $G$ and $H$ is the graph with vertex set $V_G\cup V_H$ and edge set $E_G\cup E_H$. We denote the disjoint union of $r$ copies of $G$ by $rG$.

\medskip
\noindent
{\bf Bipartite graphs:} A graph $G$ is {\em bipartite} if its vertex set can be partitioned into two (possibly empty) independent sets.
Let $H$ be a bipartite graph.
We say that $H$ is a \emph{labelled} bipartite graph if we are also given a \emph{black-and-white labelling} $\ell$, which is a labelling 
that assigns either the colour ``black'' or the colour ``white''  to each vertex of $H$ in such a way that the
two resulting monochromatic colour classes $B_H^\ell$ and $W_H^\ell$ form a partition of $H$ into two (possibly empty) independent sets. 
From now on we denote a graph $H$ with such a labelling $\ell$ by $H^\ell=(B_H^\ell,W_H^\ell,E_H)$. 
Here the pair $(B_H^\ell,W_H^\ell)$ is {\it ordered}, that is, $(B_H^\ell,W_H^\ell,E_H)$ and $(W_H^\ell,B_H^\ell,E_H)$ are different labelled 
bipartite graphs.

We say that two labelled bipartite graphs $H_1^\ell$ and $H_2^{\ell^*}$ are \emph{isomorphic} if  the (unlabelled) graphs $H_1$ and $H_2$ are isomorphic, and if in addition there exists an isomorphism $f: V_{H_1}\to V_{H_2}$ such that for all $u\in V_{H_1}$,
$u\in W^\ell_{H_1}$ if and only if $f(u)\in W^{\ell^*}_{H_2}$.
Moreover, if $H_1=H_2$, then $\ell$ and $\ell^*$ are said to be {\em isomorphic} labellings.
For example, the bipartite graphs $(\{u,v\},\emptyset)$ and $(\{x,y\},\emptyset)$ are isomorphic, and the labelled bipartite graph $(\{u,v\},\emptyset,\emptyset)$
is isomorphic to the labelled bipartite graph $(\{x,y\},\emptyset, \emptyset)$.
However, $(\{x,y\},\emptyset,\emptyset)$ is neither isomorphic to $(\emptyset,\{x,y\},\emptyset)$
nor to $(\{x\},\{y\},\emptyset)$ (also see Figure~\ref{f-2p1}).

We write $H_1^\ell \li H_2^{\ell^*}$ if $H_1\ssi H_2$, $B_{H_1}^\ell\subseteq B_{H_2}^{\ell^*}$ and $W_{H_1}^\ell\subseteq W_{H_2}^{\ell^*}$.
In this case we say that $H_1^\ell$ is a \emph{labelled} induced subgraph of $H_2^{\ell^*}$. 
Note that the two labelled bipartite graphs $H_1^{\ell_1}$ and $H_2^{\ell_2}$ are isomorphic if and only if $H_1^{\ell_1}$ is a labelled induced subgraph of
$H_2^{\ell_2}$, and vice versa. 

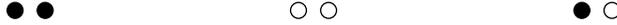
\begin{figure}
\begin{center}
\begin{subfigure}{0.3\textwidth}
\centering
\begin{tikzpicture}[scale=0.4]
\GraphInit[vstyle=Simple]
\SetVertexSimple[MinSize=6pt]
\Vertex[x=0,y=0]{x}
\Vertex[x=1,y=0]{y}
\end{tikzpicture}
\end{subfigure}
\begin{subfigure}{0.3\textwidth}
\centering
\begin{tikzpicture}[scale=0.4]
\GraphInit[vstyle=Simple]
\SetVertexSimple[MinSize=6pt,FillColor=white]
\Vertex[x=0,y=0]{x}
\Vertex[x=1,y=0]{y}
\end{tikzpicture}
\end{subfigure}
\begin{subfigure}{0.3\textwidth}
\centering
\begin{tikzpicture}[scale=0.4]
\GraphInit[vstyle=Simple]
\SetVertexSimple[MinSize=6pt]
\Vertex[x=0,y=0]{x}
\SetVertexSimple[MinSize=6pt,FillColor=white]
\Vertex[x=1,y=0]{y}
\end{tikzpicture}
\end{subfigure}
\caption{The graph $2P_1$ partitioned into three ways; none of these three labelled bipartite graphs are isomorphic to each other.}\label{f-2p1}
\end{center}
\end{figure}

Let $G$ be an (unlabelled) bipartite graph, and let $H^\ell$ be a labelled bipartite graph. 
We say that $G$ contains $H^\ell$ as a \emph{strongly labelled} induced subgraph if $H^\ell\li (B_G,W_G,E_G)$ for some bipartition $(B_G,W_G)$ of $G$.
If not, then $G$ is said to be \emph{strongly $H^\ell$-free}.
We say that $G$ contains $H^\ell$ as a \emph{weakly labelled} induced subgraph if $H^\ell\li (B_G,W_G,E_G)$ for all bipartitions $(B_G,W_G)$ of $G$.
If not, then $G$ is said to be \emph{weakly $H^\ell$-free}. Equivalently, $G$
is strongly $H^\ell$-free if for every labelling $\ell^*$ of $G$, $G^{\ell^*}$
does not contain $H^\ell$ as a labelled induced subgraph and $G$ is weakly
$H^\ell$-free if there is a labelling $\ell^*$ of $G$ such that $G^{\ell^*}$
does not contain $H^\ell$ as a labelled induced subgraph.
Note that these two notions of freeness are only defined for (unlabelled) bipartite graphs. 
Let $\{H_1^{\ell_1},\ldots, H_p^{\ell_p}\}$ be a set of labelled bipartite graphs. Then a graph $G$ is {\it strongly (weakly) $(H_1^{\ell_1},\ldots, H_p^{\ell_p})$-free} if $G$ is 
strongly (weakly) $H_i^{\ell_i}$-free for $i=1,\ldots,p$.
 
The following lemma shows that for all labelled bipartite graphs $H^\ell$, the class of $H$-free graphs 
is a (possibly proper) subclass of the class of strongly $H^\ell$-free bipartite graphs and that the latter graph class is a (possibly proper) subclass of the class of weakly $H^\ell$-free bipartite graphs.
 
\begin{lemma}\label{l-observation}
Let $G$ be a bipartite graph and $H^\ell$ be a labelled bipartite graph. The following two statements hold:
\begin{itemize}
\item [(i)] If $G$ is $H$-free, then $G$ is strongly $H^\ell$-free.
\item [(ii)] If $G$ is strongly $H^\ell$-free, then $G$ is weakly $H^\ell$-free.
\end{itemize}
Moreover, the two reverse statements are not necessarily true.
\end{lemma}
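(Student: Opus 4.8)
The plan is to prove the implications (i) and (ii) directly from the definitions, and then to block the two converses with explicit small examples. The whole argument is little more than unwinding the definitions, so I do not expect a genuine obstacle; the one place where some care is needed is checking that the claimed counterexamples really work for \emph{all} relevant bipartitions.

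For (i) I would argue contrapositively. If $G$ is not strongly $H^\ell$-free, then $G$ contains $H^\ell$ as a strongly labelled induced subgraph, so there is a bipartition $(B_G,W_G)$ of $G$ with $H^\ell\li(B_G,W_G,E_G)$; by the definition of $\li$ this already gives $H\ssi G$, so $G$ is not $H$-free. For (ii) the key remark is that every bipartite graph has at least one black-and-white labelling. Using the reformulations of the two notions stated immediately before the lemma, $G$ is strongly $H^\ell$-free iff $G^{\ell^*}$ fails to contain $H^\ell$ as a labelled induced subgraph for \emph{every} labelling $\ell^*$ of $G$, and weakly $H^\ell$-free iff the same holds for \emph{some} labelling $\ell^*$; since at least one labelling exists, the first condition implies the second.

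For the ``moreover'' part I would take, in both cases, $H^\ell=(\{x\},\{y\},\emptyset)$, i.e.\ $2P_1$ with one black and one white vertex (the right-hand picture in Figure~\ref{f-2p1}). To refute the converse of (i), let $G=P_3$: the two end-vertices of $P_3$ induce a $2P_1$, so $G$ is not $H$-free, yet $P_3$ is connected and thus has exactly two bipartitions, in each of which its two non-adjacent vertices receive the same colour; hence $H^\ell\not\li(B_G,W_G,E_G)$ for every bipartition and $G$ is strongly $H^\ell$-free. To refute the converse of (ii), let $G=2P_1$: colouring both vertices black produces a bipartition with empty white class, so $H^\ell$ is not a labelled induced subgraph of the resulting labelled graph and $G$ is weakly $H^\ell$-free, whereas colouring one vertex black and the other white produces a bipartition that does contain $H^\ell$, so $G$ is not strongly $H^\ell$-free.

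Thus the only real work is the verification in the first counterexample that no bipartition of $P_3$ contains $H^\ell$; choosing $G$ connected keeps this to two cases. Apart from that, and from keeping the order of the pair $(B_H^\ell,W_H^\ell)$ straight when comparing $H^\ell$ with a bipartition of $G$, everything is routine.
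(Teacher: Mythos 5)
Your proof is correct. Parts (i) and (ii) are handled exactly as in the paper (they follow by unwinding the definitions; your explicit note that (ii) needs the existence of at least one bipartition of $G$ is the right subtlety to flag, and it always holds since even $(\emptyset,\emptyset)$ is a valid bipartition). The only genuine difference is your choice of witnesses for the ``moreover'' part. The paper uses a single pair $G=S_{1,1,3}$ and $H=K_{1,3}+P_1$ with two different labellings $H^\ell$ and $H^{\ell^*}$ to refute both converses; you instead use the mixed labelling of $2P_1$, namely $H^\ell=(\{x\},\{y\},\emptyset)$, against $G=P_3$ (not $H$-free, yet strongly $H^\ell$-free since the unique independent pair of $P_3$ is monochromatic in both of its bipartitions) and against $G=2P_1$ (not strongly $H^\ell$-free via the mixed bipartition, yet weakly $H^\ell$-free via the all-black bipartition). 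Both sets of examples check out; yours are smaller and easier to verify exhaustively, while the paper's single six-vertex example has the minor expository advantage of exhibiting one unlabelled pair $(G,H)$ witnessing both failures at once and of not relying on a bipartition with an empty colour class.
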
 

\begin{proof}
Statements (i)~and~(ii) follow by definition. The following two examples, which are also depicted in Figure~\ref{f-graphsfromobservation}, show that the reverse statements may not necessarily be true.
Let  $G$ be isomorphic to $S_{1,1,3}$ with
$V_{G}=\{u_1,\ldots,u_6\}$ and $E_{G}=\{u_1u_2,u_1u_3,u_1u_4,u_4u_5,u_5u_6\}$.
Let $H=K_{1,3}+P_1$. We denote the vertex set and edge set of $H$ by $V_H=\{x_1,x_2,x_3,x_4,x_5\}$ and $E_H=\{x_1x_2,x_1x_3,x_1x_4\}$.

Let $H^\ell=(\{x_2,x_3,x_4\},\{x_1,x_5\},E_H)$.
We first notice that $G$ is not $H$-free, because $G[u_1,u_2,u_3,u_4,u_6]$ is isomorphic to $K_{1,3}+P_1$.
However, we do have that $G$ is strongly $H^\ell$-free, because $H^\ell$ is neither a labelled induced subgraph of $(\{u_1,u_5\},\{u_2,u_3,u_4,u_6\},E_G\}$ nor of
$(\{u_2,u_3,u_4,u_6\},\{u_1,u_5\},E_G\}$.

Let $H^{\ell^*}=(\{x_2,x_3,x_4,x_5\},\{x_1\},E_H)$. Then $G$ is not strongly $H^{\ell^*}$-free, because
$(\{u_2,u_3,u_4,u_6\},\{u_1\},\{u_1u_2,u_1u_3,u_1u_4\})$ is isomorphic to $H^{\ell^*}$.  
However, $G$ is weakly $H^{\ell^*}$-free, because $H^{\ell^*}$ is not a labelled induced subgraph of  $(\{u_1,u_5\},\{u_2,u_3,u_4,u_6\},E_G\})$.
\qed
\end{proof}

\begin{figure}
\begin{center}
\subcaptionbox{$G$}[0.3\textwidth]
{\centering
\begin{tikzpicture}[scale=1]
\GraphInit[vstyle=Empty]
\Vertex[x=0,y=0,L=$u_1$]{x}
\Vertex[x=1,y=0,L=$u_4$]{x1}
\Vertex[x=2,y=0,L=$u_5$]{x2}
\Vertex[x=3,y=0,L=$u_6$]{x3}
\Vertex[a=120,d=1,L=$u_2$]{y1}
\Vertex[a=-120,d=1,L=$u_3$]{y2}
\Edges(y1,x,y2)
\Edges(x,x1,x2,x3)
\end{tikzpicture}
}
\subcaptionbox{$H^\ell$}[0.38\textwidth]
{\centering
\centering
\begin{tikzpicture}[scale=1]
\GraphInit[vstyle=Simple]
\SetVertexSimple[MinSize=6pt]
\Vertex[x=-0.5,y=0]{y1}
\Vertex[x=0,y=0]{y2}
\Vertex[x=0.5,y=0]{y3}
\SetVertexSimple[MinSize=6pt,FillColor=white]
\Vertex[x=0,y=1]{x}
\Vertex[x=1,y=1]{z}
\Edge(x)(y1)
\Edge(x)(y2)
\Edge(x)(y3)
\end{tikzpicture}
}
\subcaptionbox{$H^{\ell^*}$}[0.2\textwidth]
{\centering
\centering
\begin{tikzpicture}[scale=1]
\GraphInit[vstyle=Simple]
\SetVertexSimple[MinSize=6pt]
\Vertex[x=1,y=0]{z}
\Vertex[x=-0.5,y=0]{y1}
\Vertex[x=0,y=0]{y2}
\Vertex[x=0.5,y=0]{y3}
\SetVertexSimple[MinSize=6pt,FillColor=white]
\Vertex[x=0,y=1]{x}
\Edge(x)(y1)
\Edge(x)(y2)
\Edge(x)(y3)
\end{tikzpicture}
}
\caption{The graphs $G, H^\ell$ and $H^{\ell^*}$ from the proof of Lemma~\ref{l-observation}.}
\label{f-graphsfromobservation}
\end{center}
\end{figure}
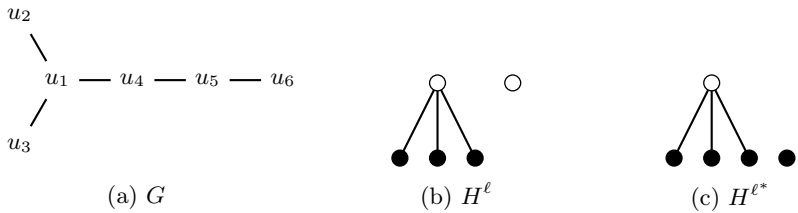
\vspace*{-0.8cm}

\noindent
{\bf Special Graphs:}
For $r\geq 1$, the graphs $C_r$, $K_r$, $P_r$ denote the cycle, complete graph and path on $r$ vertices, respectively, and the graph $K_{1,r}$ denotes 
the star on $r+1$ vertices. If $r=3$, the graph $K_{1,r}$ is also called the {\em claw}.
For $1\leq h\leq i\leq j$, let $S_{h,i,j}$ denote the tree that has only one vertex $x$ of degree $3$ and that has exactly three leaves, which are of distance $h$, $i$ and $j$ from $x$, respectively.
Observe  that $S_{1,1,1}=K_{1,3}$. A graph $S_{h,i,j}$ is called a {\it subdivided claw}.

Let $H^\ell=(B_H^\ell,W_H^\ell,E_H)$ be a labelled bipartite graph.
The {\em opposite} of  $H^\ell$ is defined as 
the labelled bipartite graph $H^{\overline{\ell}}=(W_H^\ell,B_H^\ell,E_H)$. We say that $\overline{\ell}$ is the {\em opposite} black-and-white labelling of~$\ell$. Suppose that $H$ is a bipartite graph such that among all its black-and-white labellings, all those that maximize the number of black vertices are isomorphic.
In this case we pick one of such labelling and call 
it $b$.

\section{The Classifications}\label{s-main}

A full classification of the boundedness of the clique-width of strongly $H^\ell$-free
bipartite graphs was given by Lozin and Voltz~\cite{LV08} except that in their result the trivial case when $H^\ell = (sP_1)^b$ or $H^\ell = (sP_1)^{\overline{b}}$ for some $s\geq 1$ was missing. Their proof is correct except it overlooked this case, which occurs when one of the colour classes of a labelled graph $H^\ell$ is empty. However, strongly $(sP_1)^b$-free bipartite graphs can have at most $2s-2$ vertices, and as such form a class of bounded clique-width.
Below we state their result after incorporating this small correction, followed by our results for the other two variants of freeness.
We refer to Figure~\ref{f-graphsfromtheorems} for pictures of the labelled bipartite graphs used in Theorems~\ref{thm:bipvadim} and~\ref{thm:bipweak}.

\begin{theorem}[\cite{LV08}]\label{thm:bipvadim}
Let $H^\ell$  be a  labelled bipartite graph. The class of strongly $H^\ell$-free bipartite graphs has bounded clique-width if and only if
one of the following cases holds:
\begin{itemize}
\item [$\bullet$] $H^\ell = (sP_1)^b$ \hspace*{15.7mm} or \hspace*{1.5mm} $H^\ell = (sP_1)^{\overline{b}}$ \hspace*{17mm} for some $s\geq 1$
\item [$\bullet$]  $H^\ell \li (K_{1,3}+3P_1)^b$  \hspace*{2mm}  or \hspace*{1.5mm} $H^\ell \li (K_{1,3}+3P_1)^{\overline{b}}$
\item [$\bullet$] $H^\ell\li (K_{1,3}+P_2)^{b}$   \hspace*{3.5mm}  or \hspace*{1.5mm} $H^\ell\li (K_{1,3}+P_2)^{\overline{b}}$
 \item [$\bullet$] $H^\ell \li  (P_1+S_{1,1,3})^b$  \hspace*{2mm}  or \hspace*{1.5mm}  $H^\ell \li  (P_1+S_{1,1,3})^{\overline{b}}$
\item [$\bullet$] $H^\ell \li (S_{1,2,3})^{b}$   \hspace*{9.8mm}  or  \hspace*{1.5mm} $H^\ell \li (S_{1,2,3})^{\overline{b}}$.
\end{itemize}
\end{theorem}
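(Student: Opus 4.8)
The plan is to establish both directions of the equivalence, using throughout two structural observations. First, strong freeness is monotone under the labelled-induced-subgraph order: if $H_1^{\ell_1}\li H_2^{\ell_2}$ then, since $\li$ is transitive, every bipartition of a bipartite graph $G$ having $H_2^{\ell_2}$ as a labelled induced subgraph also has $H_1^{\ell_1}$ as one, so the class of strongly $H_1^{\ell_1}$-free bipartite graphs is contained in the class of strongly $H_2^{\ell_2}$-free bipartite graphs. Second, swapping the two colour classes of every bipartition of $G$ shows that $G$ is strongly $H^\ell$-free if and only if $G$ is strongly $H^{\overline{\ell}}$-free, so these two classes coincide. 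Consequently, for the ``if'' direction it suffices to bound the clique-width of the strongly $H^\ell$-free bipartite graphs for the five $\li$-maximal labelled graphs $(sP_1)^b$, $(K_{1,3}+3P_1)^b$, $(K_{1,3}+P_2)^b$, $(P_1+S_{1,1,3})^b$ and $(S_{1,2,3})^b$; the $\overline{b}$-versions then follow automatically. For the ``only if'' direction it suffices to show that for every $\li$-minimal labelled bipartite graph $H^\ell$ lying below none of the listed graphs, the class of strongly $H^\ell$-free bipartite graphs has unbounded clique-width.

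For the ``if'' direction, the case $H^\ell=(sP_1)^b$ is immediate: as already observed, such a graph has at most $2s-2$ vertices, hence bounded clique-width. For each of the other four maximal graphs I would first reformulate strong $H^\ell$-freeness as a local constraint that must hold in \emph{every} bipartition $(B_G,W_G)$ of $G$. For instance, $G$ is strongly $(K_{1,3}+3P_1)^b$-free exactly when, in every bipartition, each vertex $w$ either has degree at most $2$ or is non-adjacent to at most $2$ vertices of the opposite class; a similar but slightly more involved condition describes strong $(K_{1,3}+P_2)^b$-freeness. From such constraints one extracts a decomposition of $G$ into a ``dense core'' (a complete bipartite graph with few deleted edges) together with a bounded-degree part attached in a controlled way, and standard bounds on the clique-width of such compositions finish these cases. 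The two subdivided-claw classes --- strongly $(P_1+S_{1,1,3})^b$-free and strongly $(S_{1,2,3})^b$-free bipartite graphs --- are the substantial ones: here I would prove a structural decomposition theorem showing that each such graph is obtained from graphs in an already-known bounded-clique-width class of bipartite graphs (for example bipartite chain graphs and complete bipartite graphs) by a bounded number of clique-width-preserving operations, and conclude via the composition lemmas for clique-width.

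For the ``only if'' direction, the $\li$-minimal labelled bipartite graphs $H^\ell$ that lie below no listed graph fall into a handful of types according to why the underlying graph $H$ fails to embed as an induced subgraph into any of $K_{1,3}+3P_1$, $K_{1,3}+P_2$, $P_1+S_{1,1,3}$, $S_{1,2,3}$: $H$ contains a cycle; $H$ has a vertex of degree at least $4$; $H$ has two vertices of degree $3$ or two non-trivial components that are jointly too large; $H$ is a single subdivided claw strictly larger than $S_{1,2,3}$ (such as $S_{2,2,2}$, $S_{1,3,3}$ or $S_{2,2,3}$); or $H$ is a linear forest not permitted by the list (such as $P_7$ or $2P_3$). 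For each resulting $H^\ell$ I would exhibit a family of bipartite graphs of unbounded clique-width that is strongly $H^\ell$-free, chosen case by case from the classical unbounded-clique-width bipartite classes --- walls, grids, chordal bipartite graphs --- together with suitable subdivisions and tailored variants. For example, walls, having girth $6$ and maximum degree $3$, are strongly free of every labelling of $C_4$ and of $K_{1,4}$, and chordal bipartite graphs are strongly free of every labelling of $C_k$ for $k\ge 6$. Since these families may be taken connected, each member has an essentially unique bipartition, so strong $H^\ell$-freeness reduces to checking that $H$ does not occur as an induced subgraph with the prescribed colouring, which follows from the girth, maximum degree and induced-path structure of the chosen family.

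The step I expect to be the main obstacle is the ``if'' direction for the two subdivided-claw classes: pinning down the precise structural decomposition of strongly $(S_{1,2,3})^b$-free and strongly $(P_1+S_{1,1,3})^b$-free bipartite graphs, and the bounded-clique-width class they reduce to, requires a careful analysis of how induced paths can branch in such graphs. A secondary subtlety, present throughout, is that strong $H^\ell$-freeness quantifies over \emph{all} bipartitions of $G$, not merely the canonical ones, which must be handled with care for disconnected graphs; this is one reason for taking the hard-direction families connected.
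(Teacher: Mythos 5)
First, a point of comparison: the paper does not prove this theorem at all. It is imported from Lozin and Volz \cite{LV08}; the authors' only original contribution to the statement is the observation that the case $H^\ell=(sP_1)^b$ was overlooked there and is trivially bounded because such graphs have at most $2s-2$ vertices. Your proposal is therefore being measured against the Lozin--Volz proof rather than anything in this paper. Your architectural reductions are correct and standard: monotonicity of strong freeness under $\li$ (so it suffices to bound the $\li$-maximal listed cases), invariance under taking the opposite labelling (the paper's Lemma~\ref{l-nonequi}), and the corresponding reduction of the ``only if'' direction to $\li$-minimal non-listed graphs. The $(sP_1)^b$ case matches the paper's correction.

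There are, however, two genuine gaps. (1) The ``if'' direction for $(K_{1,3}+3P_1)^b$, $(K_{1,3}+P_2)^b$, $(P_1+S_{1,1,3})^b$ and $(S_{1,2,3})^b$ is the entire substance of \cite{LV08}, and you do not supply it: ``I would prove a structural decomposition theorem'' is a placeholder, not an argument. Even your most concrete sketch is insufficient as stated: for $(K_{1,3}+3P_1)^b$, after bipartite complementation between the near-complete white vertices and $B$ you obtain a bipartite graph in which every white vertex has at most two black neighbours, but such graphs (e.g.\ $1$-subdivisions of arbitrary graphs) already have unbounded clique-width, so the constraint coming from a single bipartition does not finish the case; one must exploit that the condition holds for \emph{every} bipartition, and this is where the real work lies. (2) In the ``only if'' direction you organise the minimal bad cases ``according to why the underlying graph $H$ fails to embed'' into one of the listed unlabelled graphs. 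This misses the labelled graphs $H^\ell$ for which $H$ \emph{does} embed into a listed graph but no labelling of that embedding is compatible with $\ell$. A concrete example is $4P_1$ labelled with two black and two white vertices: $4P_1\ssi K_{1,3}+3P_1$, yet this labelled graph is $\li$-below none of the listed labelled graphs, and the class of strongly free graphs for it is unbounded (the present paper relies on exactly this instance in the proof of Lemma~\ref{l-last}). Your enumeration, and hence your choice of witnessing families, must be driven by the labelled containment order, not by the unlabelled one; witnesses for these purely labelling-driven cases typically come from bipartite complements of (subdivided) walls, via the paper's Fact~2 and Lemma~\ref{l-bipclassS}, rather than from girth or degree arguments.
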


\begin{theorem}\label{thm:bipartite}
Let $H$ be a graph. The class of $H$-free bipartite graphs has bounded
clique-width if and only if one of the following cases holds:
\begin{itemize}
\item [$\bullet$] $H=sP_1$ for some $s\geq 1$
\item [$\bullet$] $H\ssi K_{1,3}+3P_1$
\item [$\bullet$] $H\ssi K_{1,3}+P_2$ 
\item [$\bullet$] $H\ssi P_1+S_{1,1,3}$
\item [$\bullet$] $H\ssi S_{1,2,3}$.
\end{itemize}
\end{theorem}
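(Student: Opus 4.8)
\noindent
The plan is to reduce to Theorem~\ref{thm:bipvadim} via the following observation, which is immediate from the definitions: a bipartite graph $G$ is $H$-free if and only if $G$ is strongly $H^\ell$-free for \emph{every} black-and-white labelling $\ell$ of $H$. Indeed, if $H\ssi G$ then any bipartition of $G$ restricts to a labelling $\ell$ of the induced copy of $H$, witnessing that $G$ is not strongly $H^\ell$-free; the converse is Lemma~\ref{l-observation}(i). Hence the class of $H$-free bipartite graphs is exactly the intersection, over all black-and-white labellings $\ell$ of $H$, of the classes of strongly $H^\ell$-free bipartite graphs. This will settle the ``if'' direction at once and reduce the ``only if'' direction to exhibiting hard subclasses.

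For the ``if'' direction, suppose $H$ satisfies one of the five conditions. If $H=sP_1$, then every $sP_1$-free bipartite graph has at most $2s-2$ vertices, since each colour class is an independent set on at most $s-1$ vertices, so this class trivially has bounded clique-width. Otherwise $H\ssi F$ for some $F\in\{K_{1,3}+3P_1,\,K_{1,3}+P_2,\,P_1+S_{1,1,3},\,S_{1,2,3}\}$. I fix an induced embedding of $H$ into $F$ and take $\ell$ to be the labelling of $H$ in which a vertex is black precisely when its image in $F$ is black in the labelling $b$ of $F$; this is a valid black-and-white labelling because $H$ is an induced subgraph of $F$. Then $H^\ell\li F^b$, so by Theorem~\ref{thm:bipvadim} the class of strongly $H^\ell$-free bipartite graphs has bounded clique-width, and by Lemma~\ref{l-observation}(i) the class of $H$-free bipartite graphs is contained in it; hence it too has bounded clique-width.

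For the ``only if'' direction, call a bipartite graph \emph{bad} if it satisfies none of the five conditions. The graphs satisfying one of the conditions are closed under taking induced subgraphs, so the bad graphs are closed upwards under $\ssi$; as $\ssi$ is well-founded, every bad $H$ contains a \emph{minimal} bad graph $H_0$ as an induced subgraph, and since any bipartite class that is $H_0$-free is automatically $H$-free, it suffices to produce, for every minimal bad $H_0$, a class of $H_0$-free bipartite graphs of unbounded clique-width; equivalently, one runs a case analysis on the structure of a bad $H$. First, if $H$ contains a cycle then it contains an induced $C_{2k}$ for some $k\geq 2$, because no good graph contains a cycle; here a class of bipartite graphs of girth greater than $2k$ and unbounded clique-width works (for instance sufficiently subdivided walls or grids, which have no cycle of length $2k$). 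Second, if $H$ is a forest with a vertex of degree at least $4$ then $K_{1,4}\ssi H$, because no good graph has such a vertex; here a class of subcubic bipartite graphs of unbounded clique-width works (for instance walls). Otherwise $H$ is a bad subcubic forest; here one splits further by the number of degree-$3$ vertices and the number and sizes of the components of $H$, and in each bad sub-case $H$ contains one of a short, explicit list of obstructions (examples include $2P_3$, $3P_2$, $P_2+4P_1$, the tree with two adjacent vertices of degree $3$ and four leaves, $S_{2,2,2}$, and $S_{1,1,2}+P_2$), each of which one would equip with a dedicated bipartite class of unbounded clique-width drawn from the clique-width literature (grids, walls, subdivided walls, and the families that Lozin and Volz use to prove Theorem~\ref{thm:bipvadim}).

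The hard part will be this final step: enumerating the obstructions among bad subcubic forests and verifying, for each, that a suitable unbounded-clique-width bipartite class contains no induced copy of it, which calls for a careful analysis of the (small) induced-subgraph structure of these forests. By contrast, the cycle case and the high-degree case follow immediately once the elementary structural observations above are combined with the standard large-girth and bounded-degree constructions of bipartite graph classes of unbounded clique-width.
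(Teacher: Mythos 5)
Your ``if'' direction is correct and is essentially the paper's argument: the class of $H$-free bipartite graphs is contained in the class of strongly $H^\ell$-free bipartite graphs for any labelling $\ell$ (Lemma~\ref{l-observation}(i)), and choosing the labelling induced by an embedding of $H$ into one of the four maximal graphs equipped with the labelling $b$ lets you quote Theorem~\ref{thm:bipvadim}; the $sP_1$ case is handled exactly as in the paper. Your treatment of bad graphs containing a cycle or a vertex of degree at least~$4$ is also sound; both are instances of $H\notin{\cal S}$ and are covered by subdivided walls (the paper's Lemma~\ref{l-classS}).

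The genuine gap is the case you yourself flag as ``the hard part'': bad subcubic forests. Two things are missing. First, you never pin down the finite list of obstructions; the paper proves (Lemma~\ref{l-addit}) that a bad graph in ${\cal S}$ must contain one of exactly $2P_1+2P_2$, $2P_1+P_4$, $4P_1+P_2$, $3P_2$, $2P_3$ as an induced subgraph, and this enumeration is itself a multi-page case analysis, not a routine check. Second, and more seriously, the tools you propose for these obstructions cannot work as stated: all five of them are linear forests, hence lie in ${\cal S}$, and every class you name (grids, walls, subdivided walls, large-girth classes) contains arbitrarily large induced linear forests and therefore contains each of these graphs as an induced subgraph. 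No bounded-degree or large-girth construction from the standard toolbox avoids them. The paper's way around this is bipartite complementation: for each such obstruction $H'$ one enumerates its non-equivalent black-and-white labellings, checks that the bipartite complement of each labelled version lies outside ${\cal S}$ (it contains a $C_4$, a $K_{1,4}$ or a $C_6$), and then applies the Lozin--Rautenbach result (Lemma~\ref{l-bipclassS}) together with Fact~2 --- in effect, the hard class is the bipartite complement of a subdivided wall, not the wall itself (Lemma~\ref{lem:bip-unbounded}; the $2P_3$ case is quoted directly from Lozin and Volz). Your phrase ``the families that Lozin and Volz use'' gestures at this, but without carrying out the complement computation the ``only if'' direction is not established for any bad $H$ that is a subcubic forest, e.g.\ $H=3P_2$ or $H=2P_1+P_4$.
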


\begin{theorem}\label{thm:bipweak}
Let $H^\ell$ be a labelled bipartite graph.
The class of weakly $H^\ell$-free bipartite graphs has bounded clique-width if and only if one of the following cases holds:
\begin{itemize}
\item [$\bullet$] $H^\ell \hspace*{1mm} = (sP_1)^b$  \hspace*{11.7mm} or  \hspace*{1.5mm}  $H^\ell = (sP_1)^{\overline{b}}$  \hspace*{17mm}  for some $s\geq 1$
\item [$\bullet$] $H^\ell\li (2P_1+P_3)^b$   \hspace*{2mm}  or \hspace*{1.5mm} $H^\ell\li (2P_1+P_3)^{\overline{b}}$
\item [$\bullet$] $H^\ell\li (P_1+P_5)^b$ \hspace*{3.3mm}  or \hspace*{1.5mm} $H^\ell\li (P_1+P_5)^{\overline{b}}$
\item [$\bullet$]  $H \hspace*{1mm} \ssi P_2+P_4$
\item [$\bullet$] $H \hspace*{1mm} \ssi P_6$.
\end{itemize}
\end{theorem}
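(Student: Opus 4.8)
The plan is to prove both directions by combining Theorems~\ref{thm:bipvadim} and~\ref{thm:bipartite} with a small amount of new structural work, relying throughout on three elementary observations. First, \emph{monotonicity}: if $H_1^{\ell_1}\li H_2^{\ell_2}$, then any bipartition of a bipartite graph $G$ that avoids $H_1^{\ell_1}$ also avoids $H_2^{\ell_2}$, so the class of weakly $H_1^{\ell_1}$-free bipartite graphs is contained in the class of weakly $H_2^{\ell_2}$-free bipartite graphs; hence boundedness of clique-width propagates downwards and unboundedness upwards along $\li$. Second, by Lemma~\ref{l-observation}, the class of $H$-free bipartite graphs and the class of strongly $H^\ell$-free bipartite graphs are both subclasses of the class of weakly $H^\ell$-free bipartite graphs. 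Third, clique-width is the maximum over connected components and the relevant classes are hereditary, so it suffices to reason about connected graphs, and bounded-clique-width preserving operations (adding isolated and pendant vertices, a bounded number of bipartite complementations) may be freely applied.

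For the ``if'' direction it suffices, by monotonicity, to bound the clique-width of the classes corresponding to the $\li$-maximal entries of the list. The case $H^\ell=(sP_1)^b$ (and its opposite) is immediate: a graph is weakly $(sP_1)^b$-free exactly when it admits a bipartition with one side of size less than $s$, and bipartite graphs with a bounded side have bounded clique-width. For $H\ssi P_6$ and $H\ssi P_2+P_4$ I will observe that $P_6$ and $P_2+P_4$ each have a unique black-and-white labelling up to isomorphism, so any induced copy of such an $H$ (or of an induced subgraph of it) inside a bipartite graph inherits that labelling regardless of the chosen bipartition; hence weak $H^\ell$-freeness coincides with ordinary $H$-freeness and the bound follows from Theorem~\ref{thm:bipartite}. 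The substantive cases are $(2P_1+P_3)^b$ and $(P_1+P_5)^b$ (and their opposites). Unwinding the definition, weak $(2P_1+P_3)^b$-freeness forces a bipartition $(B,W)$ in which $B$ is independent and every vertex of $W$ either has at most one neighbour in $B$ or is non-adjacent to at most one vertex of $B$; complementing the edges between $B$ and the latter set of vertices turns $G$ into a star forest, so $G$ has bounded clique-width (the case $|B|\le 3$ being trivial). For $(P_1+P_5)^b$ a good bipartition forces the two middle (white) vertices of every induced $P_5$ coloured black-white-black-white-black to dominate $B$; a structural analysis, splitting according to whether $G$ is $P_5$-free and otherwise exploiting this dominating pair, again yields bounded clique-width.

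For the ``only if'' direction, assume the class of weakly $H^\ell$-free bipartite graphs has bounded clique-width; I must show $H^\ell$ is in the list. If the underlying graph $H$ is not one of the graphs of Theorem~\ref{thm:bipartite}, then the class of $H$-free bipartite graphs already has unbounded clique-width, a contradiction; so $H$ is one of those graphs. If moreover $H^\ell$ is not one of the labelled graphs of Theorem~\ref{thm:bipvadim}, then the class of strongly $H^\ell$-free bipartite graphs has unbounded clique-width, again a contradiction. This reduces the problem to the labelled graphs lying in the list of Theorem~\ref{thm:bipvadim} but not (yet) in that of Theorem~\ref{thm:bipweak}: certain labellings of $K_{1,3}+3P_1$, $K_{1,3}+P_2$, $P_1+S_{1,1,3}$, $S_{1,2,3}$ and of their labelled induced subgraphs. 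I will handle these by a finite case analysis showing that each such $H^\ell$ contains, as a labelled induced subgraph, one of a short list of small obstructions --- principally $K_{1,3}^b$ (three black leaves, white centre), its opposite, and a few labellings of $4P_1$ (for instance $4P_1$ with two black and two white vertices) --- and, for each obstruction, exhibiting an unbounded-clique-width family of bipartite graphs that is weakly free for it. For $K_{1,3}^b$ the family is the class of $1$-subdivisions of walls, coloured so that the subdivision vertices (of degree $2$) are white; for the $4P_1$-obstructions the families are $C_4$-free bipartite graphs such as $1$-subdivided walls again, or bipartite complements thereof. Monotonicity then pushes unboundedness up to $H^\ell$.

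The step I expect to be the main obstacle is the ``if'' direction for the two genuinely new classes, weakly $(2P_1+P_3)^b$-free and especially weakly $(P_1+P_5)^b$-free bipartite graphs: extracting a usable structural decomposition from the ``there exists a good bipartition'' condition, and then checking that this decomposition really is built from a bounded-clique-width core by clique-width-preserving operations. A secondary difficulty is the bookkeeping in the ``only if'' direction --- enumerating the labellings of the graphs from Theorem~\ref{thm:bipartite} not already disposed of by Theorems~\ref{thm:bipvadim} and~\ref{thm:bipartite}, and verifying in each case that one of the few chosen obstructions embeds with the correct labelling.
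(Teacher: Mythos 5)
Your overall architecture is essentially the paper's: monotonicity along $\li$, the reductions through Lemma~\ref{l-observation} and Theorems~\ref{thm:bipvadim}--\ref{thm:bipartite}, the obstructions $(K_{1,3})^b$ and the mixed labellings of $4P_1$ (this is exactly the paper's Lemma~\ref{l-last}) realised by $1$-subdivided walls and their bipartite complements, and the star-forest argument for $(2P_1+P_3)^b$. Those parts are sound, and your colouring of the subdivided wall (subdivision vertices white) is the correct one. The genuine gap is the case $H^\ell=(P_1+P_5)^b$, which you flag as the crux but do not prove. The dominating-pair observation is true, but you give no route from ``the two white vertices of every black--white--black--white--black induced $P_5$ dominate $B$'' to bounded clique-width, and the split on $P_5$-freeness does not obviously close the argument. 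The paper uses a different device: delete one vertex so that $|B|$ is odd, bipartite-complement $B$ against the set $X$ of white vertices adjacent to fewer than half of $B$, and show the result is weakly $(P_1+2P_2)^b$-free --- any labelled $(P_1+2P_2)^b$ would extend, via a common neighbour of its two white vertices (guaranteed by the degree condition and the parity of $|B|$), to a labelled $(P_1+P_5)^b$ that survives un-complementation because $(P_1+P_5)^b$ is invariant under complementing the $B$-neighbourhood of a white vertex. Weakly $(P_1+2P_2)^b$-free graphs are then absorbed into the $(P_2+P_4)$-free/$P_6$-free cases. Without an argument of this kind the hardest bounded case is missing.

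A second, smaller error: your justification for the items $H\ssi P_2+P_4$ and $H\ssi P_6$ is wrong for \emph{proper} induced subgraphs. Such an $H$ need not have a unique labelling (e.g.\ $2P_1$, $3P_1$, $P_1+P_3$, $2P_1+P_2$), and for these weak $H^\ell$-freeness does \emph{not} coincide with $H$-freeness: every complete bipartite graph is weakly $(2P_1)^\ell$-free for the one-black-one-white labelling $\ell$, yet contains many induced copies of $2P_1$. The conclusion you need is still true, but the correct route is your own monotonicity principle: one must verify that \emph{every} labelling of every such $H$ is a labelled induced subgraph of $(P_2+P_4)^b$ or $P_6^b$ (the paper reduces this check to $H\in\{2P_1,3P_1,P_1+P_3,2P_1+P_2\}$ and verifies each case). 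Finally, the deferred ``only if'' case analysis is routine and your obstruction set is the right one; note only that for the $2{+}2$ labelling of $4P_1$ you can simply cite Theorem~\ref{thm:bipvadim}, since that labelled graph already lies outside the Lozin--Volz list.
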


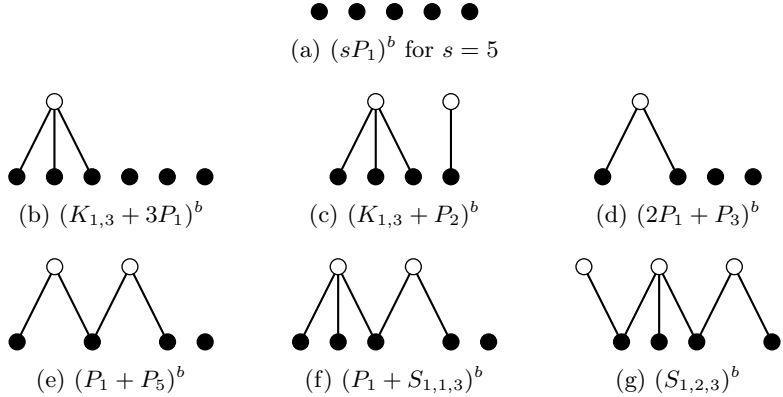
\begin{figure}
\begin{center}
\begin{subfigure}{0.3\textwidth}
\centering
\begin{tikzpicture}[scale=1]
\GraphInit[vstyle=Simple]
\SetVertexSimple[MinSize=6pt]
\Vertex[x=0,y=0]{x0}
\Vertex[x=0.5,y=0]{x1}
\Vertex[x=1,y=0]{x2}
\Vertex[x=1.5,y=0]{x3}
\Vertex[x=2,y=0]{x4}
\end{tikzpicture}
\subcaption{$(sP_1)^b$ for $s=5$}
\end{subfigure}

\vspace{1em}
\begin{subfigure}{0.3\textwidth}
\centering
\begin{tikzpicture}[scale=1]
\GraphInit[vstyle=Simple]
\SetVertexSimple[MinSize=6pt]
\Vertex[x=0.5,y=0]{x0}
\Vertex[x=1,y=0]{x1}
\Vertex[x=1.5,y=0]{x2}
\Vertex[x=2,y=0]{x3}
\Vertex[x=2.5,y=0]{x4}
\Vertex[x=3,y=0]{x5}
\SetVertexSimple[MinSize=6pt,FillColor=white]
\Vertex[x=1,y=1]{z}
\Edge(x0)(z)
\Edge(x1)(z)
\Edge(x2)(z)
\end{tikzpicture}
\subcaption{$(K_{1,3}+3P_1)^b$}
\end{subfigure}
\begin{subfigure}{0.3\textwidth}
\centering
\begin{tikzpicture}[scale=1]
\GraphInit[vstyle=Simple]
\SetVertexSimple[MinSize=6pt]
\Vertex[x=0.5,y=0]{x0}
\Vertex[x=1,y=0]{x1}
\Vertex[x=1.5,y=0]{x2}
\Vertex[x=2,y=0]{x3}
\SetVertexSimple[MinSize=6pt,FillColor=white]
\Vertex[x=1,y=1]{z}
\Vertex[x=2,y=1]{y}
\Edge(x0)(z)
\Edge(x1)(z)
\Edge(x2)(z)
\Edge(x3)(y)
\end{tikzpicture}
\subcaption{$(K_{1,3}+P_2)^b$}
\end{subfigure}
\begin{subfigure}{0.3\textwidth}
\centering
\begin{tikzpicture}[scale=1]
\GraphInit[vstyle=Simple]
\SetVertexSimple[MinSize=6pt]
\Vertex[x=0,y=0]{x0}
\Vertex[x=1,y=0]{x2}
\Vertex[x=1.5,y=0]{x4}
\Vertex[x=2,y=0]{x5}
\SetVertexSimple[MinSize=6pt,FillColor=white]
\Vertex[x=0.5,y=1]{x1}
\Edges(x0,x1,x2)
\end{tikzpicture}
\subcaption{$(2P_1+P_3)^b$}
\end{subfigure}

\vspace{1em}
\begin{subfigure}{0.3\textwidth}
\centering
\begin{tikzpicture}[scale=1]
\GraphInit[vstyle=Simple]
\SetVertexSimple[MinSize=6pt]
\Vertex[x=0,y=0]{x0}
\Vertex[x=1,y=0]{x2}
\Vertex[x=2,y=0]{x4}
\Vertex[x=2.5,y=0]{x5}
\SetVertexSimple[MinSize=6pt,FillColor=white]
\Vertex[x=0.5,y=1]{x1}
\Vertex[x=1.5,y=1]{x3}
\Edges(x0,x1,x2,x3,x4)
\end{tikzpicture}
\subcaption{$(P_1+P_5)^b$}
\end{subfigure}
\begin{subfigure}{0.3\textwidth}
\centering
\begin{tikzpicture}[scale=1]
\GraphInit[vstyle=Simple]
\SetVertexSimple[MinSize=6pt]
\Vertex[x=0.5,y=0]{x0}
\Vertex[x=1,y=0]{x1}
\Vertex[x=1.5,y=0]{x2}
\Vertex[x=2.5,y=0]{x4}
\Vertex[x=3,y=0]{x5}
\SetVertexSimple[MinSize=6pt,FillColor=white]
\Vertex[x=1,y=1]{z}
\Vertex[x=2,y=1]{x3}
\Edge(x0)(z)
\Edge(x1)(z)
\Edge(x2)(z)
\Edges(x2,x3,x4)
\end{tikzpicture}
\subcaption{$(P_1+S_{1,1,3})^b$}
\end{subfigure}
\begin{subfigure}{0.3\textwidth}
\centering
\begin{tikzpicture}[scale=1]
\GraphInit[vstyle=Simple]
\SetVertexSimple[MinSize=6pt]
\Vertex[x=0.5,y=0]{x0}
\Vertex[x=1,y=0]{x1}
\Vertex[x=1.5,y=0]{x2}
\Vertex[x=2.5,y=0]{x4}
\SetVertexSimple[MinSize=6pt,FillColor=white]
\Vertex[x=1,y=1]{z}
\Vertex[x=2,y=1]{x3}
\Vertex[x=0,y=1]{x5}
\Edges(x5,x0,z)
\Edge(x1)(z)
\Edge(x2)(z)
\Edges(x2,x3,x4)
\end{tikzpicture}
\subcaption{$(S_{1,2,3})^b$}
\end{subfigure}
\caption{The labelled bipartite graphs used in Theorems~\ref{thm:bipvadim} and~\ref{thm:bipweak}.}
\label{f-graphsfromtheorems}
\end{center}
\end{figure}

\section{The Proofs of Our Results}\label{s-proofs}

We first recall a number of basic facts on clique-width known from the literature. We then state a number of other lemmas which we use to prove 
Theorems~\ref{thm:bipartite} and~\ref{thm:bipweak}.

\subsection{Facts about Clique-width}\label{s-facts}

The {\em bipartite complement} of a bipartite graph {\em with respect to} a bipartition $(B,W)$ is the bipartite graph with bipartition $(B,W)$, in which two vertices $u\in B$ and $v\in W$  are adjacent if and only if $uv\notin E$. 
For instance, the graph $2P_2$ has $C_4$ as its only bipartite complement, whereas the graph
$2P_1$ has $2P_1$ and $P_2$ as its bipartite complements.
For two disjoint vertex subsets  $X$ and $Y$ in $G$, the {\em bipartite complementation} operation with respect to $X$ and $Y$ acts on $G$ by replacing every edge with one end-vertex in $X$ and the other one in $Y$ by a non-edge and vice versa. 
The \emph{edge subdivision} operation replaces an edge $vw$ in a graph by a new vertex $u$ with edges $uv$ and $uw$.

We now state some useful facts for dealing with clique-width. We will use these facts throughout the paper.
We will say that a graph operation {\em preserves} boundedness of clique-width if 
for any constant $k$ and any graph class ${\cal G}$, the graph class 
${\cal G}_{[k]}$ obtained by performing the operation at most $k$ times on each graph in ${\cal G}$
has bounded clique-width if and only if  ${\cal G}$  has bounded clique-width.

\medskip
\noindent
{\it Fact 1.} Vertex deletion preserves boundedness of clique-width~\cite{LR04}.

\medskip
\noindent
{\it Fact 2.} Bipartite complementation preserves boundedness of clique-width~\cite{KLM09}.

\medskip
\noindent
{\it Fact 3.}
For a class of graphs ${\cal G}$ of {\em bounded} degree, let ${\cal G}'$ be the class of graphs obtained from ${\cal G}$ by applying zero or more  
edge subdivisions operations to each graph in ${\cal G}$. 
Then ${\cal G}$ has bounded clique-width if and only if  ${\cal G}'$ has bounded clique-width~\cite{KLM09}.

\medskip
\noindent
We also use some other elementary results on the clique-width of graphs.
In order to do so we need the notion of a {\em wall}.
We do not formally define this notion, but instead refer to Figure~\ref{f-walls}, in which three examples of walls of different height are depicted. A \emph{$k$-subdivided wall} is a graph obtained from a wall after subdividing each edge 
exactly $k$ times for some constant $k\geq 0$.
 The next  well-known lemma follows from combining Fact~3 with the fact that walls 
 have maximum degree~3  and unbounded clique-width (see e.g.~\cite{KLM09}).

\begin{lemma}\label{l-walls}
For any constant $k$, the class of $k$-subdivided walls has unbounded clique-width.
\end{lemma}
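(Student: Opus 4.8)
The plan is to invoke Fact~3 in the contrapositive direction, together with the known fact that walls themselves have unbounded clique-width. First I would recall that an (unsubdivided) wall $W$ has maximum degree $3$: every vertex lies in at most three edges of the brick-wall pattern. Now fix a constant $k\ge 0$ and consider the class $\mathcal{W}$ of all walls, which has bounded degree (by $3$). Applying zero or more edge subdivisions to each graph of $\mathcal{W}$ produces a class $\mathcal{W}'$ containing, in particular, every $k$-subdivided wall for every fixed choice of walls. By Fact~3, $\mathcal{W}'$ has bounded clique-width if and only if $\mathcal{W}$ does.

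Next I would use the external input that the class of all walls has \emph{unbounded} clique-width; this is standard and is exactly the fact the excerpt points to (``walls have maximum degree~3 and unbounded clique-width (see e.g.~\cite{KLM09})''). Hence $\mathcal{W}$ has unbounded clique-width, and therefore by the equivalence above $\mathcal{W}'$ has unbounded clique-width as well. Since for the fixed constant $k$ the class of $k$-subdivided walls is a subclass of $\mathcal{W}'$ (each $k$-subdivided wall is obtained from some wall by subdividing every edge exactly $k$ times, a particular instance of ``zero or more subdivisions''), and since a class containing a subclass of unbounded clique-width cannot itself have bounded clique-width, it follows that the class of $k$-subdivided walls has unbounded clique-width.

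One subtlety to be careful about: Fact~3 as stated allows \emph{different} numbers of subdivisions on different edges and across different graphs, whereas a $k$-subdivided wall uses exactly $k$ on every edge. This causes no difficulty, because $\mathcal{W}'$ (the full subdivision-closure of the walls) certainly contains all $k$-subdivided walls; so unboundedness of $\mathcal{W}'$ does not immediately give unboundedness of the sub\emph{class} of $k$-subdivided walls. To bridge this I would argue directly: if the $k$-subdivided walls had bounded clique-width, then since each wall is obtained from a $k$-subdivided wall by $k$ edge \emph{contractions}---equivalently, each wall is obtained by suppressing degree-two vertices---one recovers the walls from a bounded-clique-width, bounded-degree class via a bounded number of subdivisions in the reverse direction, and Fact~3 (applied to the bounded-degree class of $k$-subdivided walls, whose degree is still $3$) would force the walls to have bounded clique-width, a contradiction.

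The main obstacle is therefore purely the direction-of-application bookkeeping in Fact~3: one must start from whichever of the two classes has \emph{bounded} degree and deduce the other. Since both walls and $k$-subdivided walls have maximum degree $3$, we are free to apply Fact~3 in either direction, so the contradiction goes through cleanly. Everything else---that walls have unbounded clique-width, that subclasses inherit unboundedness, that walls have degree $3$---is either cited or immediate, so there is no substantial calculation to carry out.
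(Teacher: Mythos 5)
Your proposal is correct and takes essentially the same route as the paper, which simply notes that the lemma follows from combining Fact~3 with the facts that walls have maximum degree~3 and unbounded clique-width. The detour through the full subdivision-closure $\mathcal{W}'$ and the ``reverse direction'' contraction argument is unnecessary: Fact~3 can be applied directly with $\mathcal{G}$ the class of walls and $\mathcal{G}'$ the class of $k$-subdivided walls (a particular choice of ``zero or more subdivisions per graph''), and the stated equivalence immediately transfers unboundedness.
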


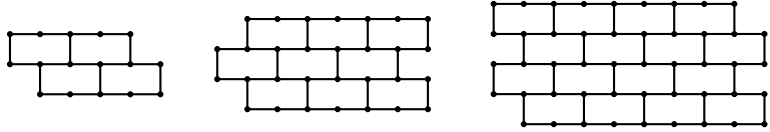
\begin{figure}
\begin{center}
\begin{subfigure}{0.2\textwidth}
\centering
\begin{tikzpicture}[scale=0.4, every node/.style={scale=0.3}]
\GraphInit[vstyle=Simple]
\SetVertexSimple[MinSize=6pt]
\Vertex[x=1,y=0]{v10}
\Vertex[x=2,y=0]{v20}
\Vertex[x=3,y=0]{v30}
\Vertex[x=4,y=0]{v40}
\Vertex[x=5,y=0]{v50}

\Vertex[x=0,y=1]{v01}
\Vertex[x=1,y=1]{v11}
\Vertex[x=2,y=1]{v21}
\Vertex[x=3,y=1]{v31}
\Vertex[x=4,y=1]{v41}
\Vertex[x=5,y=1]{v51}

\Vertex[x=0,y=2]{v02}
\Vertex[x=1,y=2]{v12}
\Vertex[x=2,y=2]{v22}
\Vertex[x=3,y=2]{v32}
\Vertex[x=4,y=2]{v42}

\Edges(    v10,v20,v30,v40,v50)
\Edges(v01,v11,v21,v31,v41,v51)
\Edges(v02,v12,v22,v32,v42)

\Edge(v01)(v02)

\Edge(v10)(v11)

\Edge(v21)(v22)

\Edge(v30)(v31)

\Edge(v41)(v42)

\Edge(v50)(v51)

\end{tikzpicture}
\end{subfigure}
\begin{subfigure}{0.3\textwidth}
\centering
\begin{tikzpicture}[scale=0.4, every node/.style={scale=0.3}]
\GraphInit[vstyle=Simple]
\SetVertexSimple[MinSize=6pt]
\Vertex[x=1,y=0]{v10}
\Vertex[x=2,y=0]{v20}
\Vertex[x=3,y=0]{v30}
\Vertex[x=4,y=0]{v40}
\Vertex[x=5,y=0]{v50}
\Vertex[x=6,y=0]{v60}
\Vertex[x=7,y=0]{v70}

\Vertex[x=0,y=1]{v01}
\Vertex[x=1,y=1]{v11}
\Vertex[x=2,y=1]{v21}
\Vertex[x=3,y=1]{v31}
\Vertex[x=4,y=1]{v41}
\Vertex[x=5,y=1]{v51}
\Vertex[x=6,y=1]{v61}
\Vertex[x=7,y=1]{v71}

\Vertex[x=0,y=2]{v02}
\Vertex[x=1,y=2]{v12}
\Vertex[x=2,y=2]{v22}
\Vertex[x=3,y=2]{v32}
\Vertex[x=4,y=2]{v42}
\Vertex[x=5,y=2]{v52}
\Vertex[x=6,y=2]{v62}
\Vertex[x=7,y=2]{v72}

\Vertex[x=1,y=3]{v13}
\Vertex[x=2,y=3]{v23}
\Vertex[x=3,y=3]{v33}
\Vertex[x=4,y=3]{v43}
\Vertex[x=5,y=3]{v53}
\Vertex[x=6,y=3]{v63}
\Vertex[x=7,y=3]{v73}

\Edges(    v10,v20,v30,v40,v50,v60,v70)
\Edges(v01,v11,v21,v31,v41,v51,v61,v71)
\Edges(v02,v12,v22,v32,v42,v52,v62,v72)
\Edges(    v13,v23,v33,v43,v53,v63,v73)

\Edge(v01)(v02)

\Edge(v10)(v11)
\Edge(v12)(v13)

\Edge(v21)(v22)

\Edge(v30)(v31)
\Edge(v32)(v33)

\Edge(v41)(v42)

\Edge(v50)(v51)
\Edge(v52)(v53)

\Edge(v61)(v62)

\Edge(v70)(v71)
\Edge(v72)(v73)
\end{tikzpicture}
\end{subfigure}
\begin{subfigure}{0.35\textwidth}
\centering
\begin{tikzpicture}[scale=0.4, every node/.style={scale=0.3}]
\GraphInit[vstyle=Simple]
\SetVertexSimple[MinSize=6pt]
\Vertex[x=1,y=0]{v10}
\Vertex[x=2,y=0]{v20}
\Vertex[x=3,y=0]{v30}
\Vertex[x=4,y=0]{v40}
\Vertex[x=5,y=0]{v50}
\Vertex[x=6,y=0]{v60}
\Vertex[x=7,y=0]{v70}
\Vertex[x=8,y=0]{v80}
\Vertex[x=9,y=0]{v90}

\Vertex[x=0,y=1]{v01}
\Vertex[x=1,y=1]{v11}
\Vertex[x=2,y=1]{v21}
\Vertex[x=3,y=1]{v31}
\Vertex[x=4,y=1]{v41}
\Vertex[x=5,y=1]{v51}
\Vertex[x=6,y=1]{v61}
\Vertex[x=7,y=1]{v71}
\Vertex[x=8,y=1]{v81}
\Vertex[x=9,y=1]{v91}

\Vertex[x=0,y=2]{v02}
\Vertex[x=1,y=2]{v12}
\Vertex[x=2,y=2]{v22}
\Vertex[x=3,y=2]{v32}
\Vertex[x=4,y=2]{v42}
\Vertex[x=5,y=2]{v52}
\Vertex[x=6,y=2]{v62}
\Vertex[x=7,y=2]{v72}
\Vertex[x=8,y=2]{v82}
\Vertex[x=9,y=2]{v92}

\Vertex[x=0,y=3]{v03}
\Vertex[x=1,y=3]{v13}
\Vertex[x=2,y=3]{v23}
\Vertex[x=3,y=3]{v33}
\Vertex[x=4,y=3]{v43}
\Vertex[x=5,y=3]{v53}
\Vertex[x=6,y=3]{v63}
\Vertex[x=7,y=3]{v73}
\Vertex[x=8,y=3]{v83}
\Vertex[x=9,y=3]{v93}

\Vertex[x=0,y=4]{v04}
\Vertex[x=1,y=4]{v14}
\Vertex[x=2,y=4]{v24}
\Vertex[x=3,y=4]{v34}
\Vertex[x=4,y=4]{v44}
\Vertex[x=5,y=4]{v54}
\Vertex[x=6,y=4]{v64}
\Vertex[x=7,y=4]{v74}
\Vertex[x=8,y=4]{v84}

\Edges(    v10,v20,v30,v40,v50,v60,v70,v80,v90)
\Edges(v01,v11,v21,v31,v41,v51,v61,v71,v81,v91)
\Edges(v02,v12,v22,v32,v42,v52,v62,v72,v82,v92)
\Edges(v03,v13,v23,v33,v43,v53,v63,v73,v83,v93)
\Edges(v04,v14,v24,v34,v44,v54,v64,v74,v84)

\Edge(v01)(v02)
\Edge(v03)(v04)

\Edge(v10)(v11)
\Edge(v12)(v13)

\Edge(v21)(v22)
\Edge(v23)(v24)

\Edge(v30)(v31)
\Edge(v32)(v33)

\Edge(v41)(v42)
\Edge(v43)(v44)

\Edge(v50)(v51)
\Edge(v52)(v53)

\Edge(v61)(v62)
\Edge(v63)(v64)

\Edge(v70)(v71)
\Edge(v72)(v73)

\Edge(v81)(v82)
\Edge(v83)(v84)

\Edge(v90)(v91)
\Edge(v92)(v93)
\end{tikzpicture}
\end{subfigure}
\caption{Walls of height 2, 3, and 4, respectively.}\label{f-walls}
\end{center}
\end{figure}

\noindent
We let ${\cal S}$ be the class of graphs each connected component of which is either a subdivided claw $S_{h,i,j}$ for some $1\leq h\leq i\leq j$ or a path $P_r$ for some $r\geq 1$. 
This leads to the following lemma, which is  well-known and follows from the fact that walls have maximum degree at most~3 and from Lemma~\ref{l-walls} by choosing an appropriate value for $k$  (also note that $k$-subdivided walls are bipartite for all~$k\geq 0$).
 
\begin{lemma}\label{l-classS}
Let $\{H_1,\ldots,H_p\}$ be a finite set of graphs. 
If $H_i\notin {\cal S}$ for $i=1,\ldots,p$ then the class of $\{H_1,\ldots,H_p\}$-free bipartite graphs has unbounded~clique-width.
\end{lemma}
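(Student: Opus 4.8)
The plan is to produce, for a suitable constant $k$ depending only on $H_1,\dots,H_p$, a subclass of the $\{H_1,\dots,H_p\}$-free bipartite graphs that already has unbounded clique-width, namely the class of all $k$-subdivided walls. Since $k$-subdivided walls are bipartite and, by Lemma~\ref{l-walls}, have unbounded clique-width, it then suffices to show that no $k$-subdivided wall contains any $H_i$ as an induced subgraph. I would take $k=\max_{1\le i\le p}|V(H_i)|$, which is a positive integer because the null graph lies in $\mathcal{S}$, so every $H_i$ is nonempty.

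Fix a $k$-subdivided wall $W$ and an index $i$, and suppose for contradiction that $H_i\ssi W$. Since $H_i\notin\mathcal{S}$, some connected component $C_i$ of $H_i$ is neither a path nor a subdivided claw, and restricting to its vertices shows $C_i\ssi W$. A connected graph that is neither a path nor a subdivided claw satisfies at least one of: (a) it contains a cycle; (b) it has a vertex of degree at least $4$; (c) it is a tree of maximum degree at most $3$ having at least two vertices of degree $3$ --- for a connected graph avoiding (a) and (b) is a tree of maximum degree at most $3$, and such a tree with at most one vertex of degree $3$ is a path or a subdivided claw. I would then rely on two elementary facts about $W$: its maximum degree equals $3$ (walls have maximum degree $3$ and edge subdivision does not increase degrees), so the degree-$3$ vertices of $W$ are exactly the original ``branch'' vertices; and every cycle of $W$, and every path of $W$ joining two distinct branch vertices, has length at least $k+1$ (after leaving a branch vertex one enters a subdivided edge all of whose internal vertices have degree $2$, forcing such a walk through complete subdivided edges, at least one of them). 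Case (b) is immediate since a graph with a degree-$4$ vertex is not a subgraph of $W$. In case (a) the induced copy of $C_i$ contains a cycle, necessarily of length at most $|V(C_i)|\le k<k+1$, contradicting the girth bound. In case (c), choosing two degree-$3$ vertices $a,b$ of $C_i$, their images in $W$ have degree at least $3$ (a vertex's degree in an induced subgraph cannot exceed its degree in $W$), hence are distinct branch vertices and so lie at distance at least $k+1$ in $W$; but inside the induced copy of $C_i$ they are joined by a path of length at most $|V(C_i)|-1<k+1$, a contradiction.

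The only case requiring genuine care is (c); its point is precisely that the degree-$3$ vertices of a $k$-subdivided wall are ``far apart'' (they are exactly the branch vertices and any two of them are at distance at least $k+1$), so once $k$ exceeds $|V(H_i)|$ such a wall cannot contain a forest component that branches in more than one place. Cases (a) and (b) are routine consequences of the girth and the maximum degree of walls. Once every $k$-subdivided wall is shown to be $\{H_1,\dots,H_p\}$-free, the conclusion follows immediately from Lemma~\ref{l-walls}.
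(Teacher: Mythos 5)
Your proposal is correct and is exactly the argument the paper has in mind: it only sketches this lemma as ``well-known,'' citing that walls have maximum degree~$3$ and invoking Lemma~\ref{l-walls} for a suitably chosen $k$, and your write-up simply makes that sketch precise (taking $k=\max_i|V(H_i)|$ and checking, via the girth, the maximum degree, and the spacing of the degree-$3$ vertices of a $k$-subdivided wall, that such walls contain no graph outside ${\cal S}$ as an induced subgraph). The case analysis (a)/(b)/(c) is sound and the distance estimates are valid, so nothing further is needed.
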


\subsection{A Number of Other Lemmas}

We start with a  lemma which is related to Lemma~\ref{l-observation} and which follows immediately from the corresponding definitions.

\begin{lemma}\label{l-equivalent}
Let $G$ and $H$ be bipartite graphs. 
Then $G$ is $H$-free if and only if $G$ is strongly $H^\ell$-free for all black-and-white labellings $\ell$ of $H$.
\end{lemma}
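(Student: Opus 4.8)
The plan is to unfold both sides of the equivalence directly from the definitions given in Section~\ref{s-preliminaries} and check the two implications separately. Recall that $G$ is $H$-free means $H\not\ssi G$ as unlabelled graphs, while $G$ is strongly $H^\ell$-free means that for every labelling $\ell^*$ of $G$, the labelled graph $G^{\ell^*}$ does not contain $H^\ell$ as a labelled induced subgraph (equivalently, $H^\ell\not\li (B_G,W_G,E_G)$ for every bipartition $(B_G,W_G)$ of $G$).

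For the forward direction, suppose $G$ is $H$-free. Fix an arbitrary black-and-white labelling $\ell$ of $H$; I claim $G$ is strongly $H^\ell$-free. Indeed, if not, then $H^\ell\li (B_G,W_G,E_G)$ for some bipartition of $G$, and in particular $H\ssi G$ as unlabelled graphs (the relation $\li$ refines $\ssi$ by definition), contradicting $H$-freeness. Hence $G$ is strongly $H^\ell$-free for every $\ell$.

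For the reverse direction, suppose $G$ is not $H$-free, so there is an induced subgraph $G'$ of $G$ with a graph isomorphism $f\colon V_H\to V_{G'}$. The key step — and the only place where any real content enters — is to produce from $f$ a single labelling $\ell$ of $H$ for which $G$ fails to be strongly $H^\ell$-free. Since $G$ is bipartite and connected components can be two-coloured independently, pick any bipartition $(B_G,W_G)$ of $G$; this induces a bipartition $(B_{G'},W_{G'})$ on $G'$ by restriction. Transport it back along $f$: set $B_H^\ell = f^{-1}(B_{G'})$ and $W_H^\ell = f^{-1}(W_{G'})$. Because $f$ is an isomorphism, $(B_H^\ell,W_H^\ell)$ is a partition of $V_H$ into two independent sets, so $\ell$ is a legitimate black-and-white labelling of $H$, and by construction $H^\ell = (B_H^\ell,W_H^\ell,E_H)$ is isomorphic to $(B_{G'},W_{G'},E_{G'})$, hence $H^\ell\li (B_G,W_G,E_G)$. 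Thus $G$ is not strongly $H^\ell$-free for this particular $\ell$, so it is false that $G$ is strongly $H^\ell$-free for all $\ell$. Taking the contrapositive gives the reverse implication, completing the proof.

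I do not expect any genuine obstacle here: the statement is essentially a bookkeeping exercise once the definitions of $\li$, of graph isomorphism of labelled bipartite graphs, and of strong $H^\ell$-freeness are laid side by side. The one subtlety worth stating explicitly is that an isomorphic copy of $H$ inside $G$ inherits \emph{some} valid two-colouring from $G$'s bipartition, and pulling that colouring back through the isomorphism is exactly the labelling $\ell$ that witnesses the failure of strong freeness; there is no need to range over all labellings of $H$ in the reverse direction, just to exhibit one. Accordingly the proof can reasonably be dispatched with the remark that it "follows immediately from the corresponding definitions," as already anticipated in the text preceding the lemma.
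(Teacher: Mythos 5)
Your proof is correct and matches the paper's treatment: the paper offers no written argument, stating only that the lemma ``follows immediately from the corresponding definitions,'' and your write-up is precisely the routine unfolding it has in mind. The forward direction is immediate since $\li$ refines $\ssi$, and your reverse direction correctly exhibits the witnessing labelling by pulling a bipartition of $G$ back through the isomorphism onto the induced copy of $H$.
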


A graph $G$ that contains a graph $H$ as an induced subgraph may be weakly $H^\ell$-free for all black-and-white labellings $\ell$ of $H$; take for instance the graphs~$G$ and $H$ from the proof of Lemma~\ref{l-observation}. However, we can make the following observation, which also follows directly from the corresponding definitions. 

\begin{lemma}\label{l-equivalent2}
Let $H$ be a bipartite graph with a unique black-and-white labelling $\ell$ (up to isomorphism).
Then every bipartite graph $G$ is $H$-free if and only if it is weakly $H^\ell$-free.
\end{lemma}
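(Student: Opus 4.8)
The plan is to prove Lemma~\ref{l-equivalent2} directly from the definitions of $H$-freeness and weak $H^\ell$-freeness, exploiting the uniqueness hypothesis on the black-and-white labelling. Recall that $G$ contains $H^\ell$ as a weakly labelled induced subgraph if $H^\ell \li (B_G,W_G,E_G)$ for \emph{all} bipartitions $(B_G,W_G)$ of $G$, equivalently (as stated in Section~\ref{s-preliminaries}) if $G^{\ell^*}$ contains $H^\ell$ as a labelled induced subgraph for every labelling $\ell^*$ of $G$; and $G$ is weakly $H^\ell$-free precisely when this fails for at least one $\ell^*$.

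First I would prove the contrapositive-friendly direction: if $G$ is not $H$-free, then $G$ is not weakly $H^\ell$-free, i.e.\ $G$ contains $H^\ell$ as a weakly labelled induced subgraph. So suppose $H \ssi G$, witnessed by a vertex set $U \subseteq V_G$ with $G[U] \cong H$. Take any bipartition $(B_G,W_G)$ of $G$; I must show $H^\ell \li (B_G,W_G,E_G)$. The restriction of $(B_G,W_G)$ to $U$ induces a black-and-white labelling $\ell'$ of $G[U]$, hence (transporting along the isomorphism $G[U]\cong H$) a black-and-white labelling of $H$. By the uniqueness hypothesis, this labelling is isomorphic to $\ell$ (up to swapping colour classes one must be slightly careful — see below). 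Since isomorphic labelled graphs are mutually labelled induced subgraphs of each other (noted explicitly in the excerpt after the definition of $\li$), we get $H^\ell \li (B_G[U], W_G[U], E_{G[U]}) \li (B_G,W_G,E_G)$, as required. The converse direction is immediate: if $G$ contains $H^\ell$ as a weakly labelled induced subgraph, then in particular $H^\ell \li (B_G,W_G,E_G)$ for some (any) bipartition, so $H \ssi G$, i.e.\ $G$ is not $H$-free. Combining both directions gives the biconditional.

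The one subtlety — and the step I expect to need the most care — is the phrase ``unique up to isomorphism.'' A bipartite graph $H$ always has at least two black-and-white labellings, namely $\ell$ and its opposite $\overline{\ell}$, and these need not be isomorphic as labelled graphs (cf.\ the $2P_1$ example in Figure~\ref{f-2p1}); so strictly one should read the hypothesis as: $H$ has, up to isomorphism, a unique black-and-white labelling when we also identify $\ell$ with $\overline\ell$, \emph{or} the lemma is to be applied only to graphs $H$ for which $\ell$ and $\overline\ell$ genuinely are isomorphic. In the proof I would therefore note that the induced labelling $\ell'$ on $G[U]\cong H$ is isomorphic either to $\ell$ or to $\overline\ell$; in the latter case, replace the chosen bipartition $(B_G,W_G)$ by $(W_G,B_G)$ — still a bipartition of $G$ — and the argument goes through verbatim. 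Since the definition of ``weakly labelled induced subgraph'' quantifies over \emph{all} bipartitions of $G$, swapping colour classes is legitimate and costs nothing. With that observation in place the whole argument is a short unwinding of definitions, which is exactly why the lemma ``follows directly from the corresponding definitions'' as the text claims.
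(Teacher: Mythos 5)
Your main two paragraphs are the intended argument: the paper gives no proof beyond ``follows directly from the corresponding definitions,'' and your unwinding --- the easy direction using the existence of at least one bipartition of $G$, and the substantive direction by restricting an arbitrary bipartition of $G$ to an induced copy of $H$ and invoking uniqueness of the labelling to conclude $H^\ell\li (B_G,W_G,E_G)$ --- is exactly that unwinding.

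The problem is your final ``subtlety'' paragraph. First, no reinterpretation of the hypothesis is needed: $\overline{\ell}$ is itself a black-and-white labelling of $H$, so ``unique up to isomorphism'' already forces $\overline{\ell}$ to be isomorphic to $\ell$; this is precisely why the paper's examples of graphs satisfying the hypothesis are $P_6$ and $2P_2$ (which have colour-swapping automorphisms) rather than, say, $P_5$ or $2P_1$. Hence the induced labelling $\ell'$ is always isomorphic to $\ell$ itself and the case you worry about is vacuous. Second, and more seriously, the weaker reading you propose (uniqueness only up to equivalence, i.e.\ identifying $\ell$ with $\overline{\ell}$) would make the lemma false: take $H=P_1$ with $\ell=b$; then $G=P_1$ contains $H$ as an induced subgraph, yet $G$ is weakly $(P_1)^b$-free, witnessed by the bipartition whose black class is empty. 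Your patch for that case --- replacing $(B_G,W_G)$ by $(W_G,B_G)$ --- does not work: to show $G$ is \emph{not} weakly $H^\ell$-free you must verify $H^\ell\li (B_G,W_G,E_G)$ for \emph{every} bipartition, the original one included, and exhibiting the containment for a different bipartition says nothing about the one you started with. Keep the first two paragraphs, drop the third, and simply observe that the hypothesis as stated already subsumes the opposite labelling.
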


Note that there exist both connected bipartite graphs (for example $H=P_6$) and disconnected bipartite graphs (for example $H=2P_2$) 
that satisfy the condition of Lemma~\ref{l-equivalent2}.

Two black-and-white labellings of a bipartite graph $H$ are said to be {\em equivalent} if they are isomorphic or opposite to each other; otherwise they are said to be {\em non-equivalent}. The following lemma follows directly from the definitions. 

\begin{lemma}\label{l-nonequi}
Let $\ell$ and $\ell^*$ be two equivalent black-and-white labellings of a bipartite graph $H$.
Then the class of strongly (weakly) $H^\ell$-free graphs is equal to the class of strongly (weakly) $H^{\ell^*}$-free graphs.
\end{lemma}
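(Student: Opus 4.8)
The plan is to prove the statement directly from the definitions of strong and weak $H^\ell$-freeness given in Section~\ref{s-preliminaries}, treating the two ways in which $\ell$ and $\ell^*$ can be equivalent — isomorphic and opposite — separately. In each case it suffices to show that a bipartite graph $G$ contains $H^\ell$ as a strongly labelled induced subgraph if and only if it contains $H^{\ell^*}$ as a strongly labelled induced subgraph, and similarly for weakly labelled induced subgraphs; the two classes being equal then follows by taking complements (within the class of bipartite graphs) of these sets.

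First I would handle the case where $\ell$ and $\ell^*$ are isomorphic labellings of $H$. By definition this means there is a graph isomorphism $g:V_H\to V_H$ with $g(B_H^\ell)=B_H^{\ell^*}$ and $g(W_H^\ell)=W_H^{\ell^*}$, so $H^\ell$ and $H^{\ell^*}$ are isomorphic as \emph{labelled} bipartite graphs. If $H^\ell\li (B_G,W_G,E_G)$ for some bipartition $(B_G,W_G)$ of $G$, then precomposing the corresponding labelled-induced-subgraph embedding with $g^{-1}$ yields $H^{\ell^*}\li (B_G,W_G,E_G)$, and conversely. Hence for every bipartition of $G$, one labelled graph embeds iff the other does, which immediately gives that $G$ is strongly $H^\ell$-free iff it is strongly $H^{\ell^*}$-free, and weakly $H^\ell$-free iff it is weakly $H^{\ell^*}$-free.

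Next I would handle the case where $\ell^*=\overline{\ell}$, i.e.\ $B_H^{\ell^*}=W_H^\ell$ and $W_H^{\ell^*}=B_H^\ell$. The key observation here is that if $(B_G,W_G)$ is a bipartition of $G$ then so is $(W_G,B_G)$, and $H^\ell\li (B_G,W_G,E_G)$ holds precisely when $H^{\overline\ell}\li (W_G,B_G,E_G)$ holds (the same vertex map witnesses both, with the two colour-class containments swapped). Thus as $(B_G,W_G)$ ranges over all bipartitions of $G$, the statement ``$H^\ell\li (B_G,W_G,E_G)$ for some bipartition'' is equivalent to ``$H^{\overline\ell}\li (B_G,W_G,E_G)$ for some bipartition'', and likewise with ``for all'' in place of ``for some''. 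So again $G$ is strongly (weakly) $H^\ell$-free iff it is strongly (weakly) $H^{\overline\ell}$-free. Finally, for general equivalent $\ell,\ell^*$ one chains these two cases (an isomorphism of $\ell$ with $\ell^*$, or of $\ell$ with $\overline{\ell^*}$). I do not expect any real obstacle; the only point requiring a little care is keeping the ordered nature of the pair $(B_H^\ell,W_H^\ell)$ straight when swapping colour classes, so that the embedding witnessing $\li$ is transported correctly in each direction.
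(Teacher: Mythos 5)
Your proof is correct and matches the paper's approach: the paper simply asserts that Lemma~\ref{l-nonequi} ``follows directly from the definitions,'' and your case analysis (isomorphic labellings via composing the embedding with the label-preserving automorphism, opposite labellings via swapping $(B_G,W_G)$ to $(W_G,B_G)$, which is again a bipartition) is exactly the routine verification being left implicit. No gaps.
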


The following lemma is due to Lozin and Rautenbach~\cite{LR06}.

\begin{lemma}[\cite{LR06}]\label{l-bipclassS}
Let $\{H_1^{\ell_1},\ldots,H_p^{\ell_p}\}$ be a finite set of labelled bipartite graphs. 
For $i=1,\ldots,p$, let $F_i$ denote the bipartite complement of $H_i$ with respect to $(B^{\ell_i}_{H_i},W^{\ell_i}_{H_i})$.
If $H_i\notin {\cal S}$ for all $1\leq i\leq p$ or $F_i\notin {\cal S}$ for all $1\leq i\leq p$, then the class of strongly $(H_1^{\ell_1},\ldots,H_p^{\ell_p})$-free bipartite graphs 
has unbounded clique-width.
\end{lemma}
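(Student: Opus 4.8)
There are two sufficient conditions, and the plan is to handle them separately. First suppose $H_i\notin{\cal S}$ for every~$i$. Here I would simply note that, by Lemma~\ref{l-observation}(i), every $(H_1,\ldots,H_p)$-free bipartite graph is strongly $(H_1^{\ell_1},\ldots,H_p^{\ell_p})$-free, so the class of strongly $(H_1^{\ell_1},\ldots,H_p^{\ell_p})$-free bipartite graphs contains the class of $(H_1,\ldots,H_p)$-free bipartite graphs; the latter has unbounded clique-width by Lemma~\ref{l-classS}, and a superclass of a class of unbounded clique-width again has unbounded clique-width. This settles the first condition.

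Now suppose $F_i\notin{\cal S}$ for every~$i$. The plan is to exhibit a class of unbounded clique-width consisting of bipartite complements of suitably subdivided walls. Set $k:=\max_i|V_{F_i}|$. No $F_i$ is an induced subgraph of any $k$-subdivided wall: otherwise, since $F_i\notin{\cal S}$, some component of $F_i$ would be an induced subgraph of the wall that has maximum degree at most~$3$, has at most $k$ vertices, and is neither a path nor a subdivided claw; such a graph either contains a cycle --- impossible, since a $k$-subdivided wall has girth greater than~$k$ --- or contains two vertices of degree~$3$, which in a $k$-subdivided wall are original wall-vertices and so lie at distance at least $k+1$ --- impossible in a graph on at most $k$ vertices. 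Let ${\cal G}$ be the class of $k$-subdivided walls of height at least a constant $n_0$ (large enough for the connectivity claim below); its members are connected bipartite $(F_1,\ldots,F_p)$-free graphs, and ${\cal G}$ has unbounded clique-width by Lemma~\ref{l-walls} (deleting the finitely many smaller walls does not affect this). For $G\in{\cal G}$ with bipartition $(B,W)$ (unique up to swapping, as $G$ is connected) let $G'$ be the bipartite complement of $G$ with respect to $(B,W)$; then $G'$ is bipartite with bipartition $(B,W)$, and since each vertex of $G'$ is non-adjacent to at most three vertices of its opposite colour class, $G'$ has diameter at most~$3$ once $n_0$ is large enough, hence is connected. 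By Fact~2 the class ${\cal G}':=\{G':G\in{\cal G}\}$ has unbounded clique-width.

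It remains to verify that every $G'\in{\cal G}'$ is strongly $(H_1^{\ell_1},\ldots,H_p^{\ell_p})$-free, and this is the step I would be most careful about, since strong $H^\ell$-freeness refers to all bipartitions of the host graph. As $G'$ is connected, its only bipartitions are $(B,W)$ and $(W,B)$. Suppose, for a contradiction, that $H_i^{\ell_i}\li(B,W,E_{G'})$ for some~$i$, witnessed by an isomorphism $\phi$ from $H_i$ onto $G'[S]$ with $\phi(B_{H_i}^{\ell_i})=B\cap S$ and $\phi(W_{H_i}^{\ell_i})=W\cap S$. Since taking an induced subgraph commutes with bipartite complementation with respect to the induced bipartition, the bipartite complement of $G'[S]$ with respect to $(B\cap S,W\cap S)$ equals $G[S]$; transporting this through $\phi$ shows $G[S]$ is isomorphic to the bipartite complement of $H_i$ with respect to $(B_{H_i}^{\ell_i},W_{H_i}^{\ell_i})$, i.e.\ to $F_i$, so $F_i\ssi G$ --- contradicting that $G$ is $F_i$-free. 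The case $H_i^{\ell_i}\li(W,B,E_{G'})$ is identical, using that bipartite complementations with respect to $(X,Y)$ and to $(Y,X)$ toggle the same edges. Hence no $G'\in{\cal G}'$ contains any $H_i^{\ell_i}$ as a strongly labelled induced subgraph, so ${\cal G}'$ is a class of strongly $(H_1^{\ell_1},\ldots,H_p^{\ell_p})$-free bipartite graphs of unbounded clique-width. The reason the argument forces the use of (bipartite complements of) large walls, rather than an arbitrary class of unbounded clique-width, is exactly that it needs $G'$ to be connected so that its bipartition is determined up to swapping; the rest is an assembly of Lemmas~\ref{l-observation},~\ref{l-classS},~\ref{l-walls} and Fact~2.
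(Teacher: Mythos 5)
The paper does not prove this lemma itself --- it is quoted from Lozin and Rautenbach~\cite{LR06} without proof --- so there is no in-paper argument to compare against. Your reconstruction is correct and follows the standard route: the case $H_i\notin{\cal S}$ reduces to Lemma~\ref{l-classS} via Lemma~\ref{l-observation}(i), and the case $F_i\notin{\cal S}$ uses bipartite complements of $k$-subdivided walls, with the connectivity of the complemented wall correctly invoked so that strong freeness only has to be checked against the two bipartitions $(B,W)$ and $(W,B)$; the girth/degree-3-distance argument showing the walls are $(F_1,\ldots,F_p)$-free is also sound. No gaps.
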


In the next lemma we demonstrate a list of $H$-free bipartite classes with unbounded clique-width.
It is obtained by combining a known result of Lozin and Voltz~\cite{LV08} with a number of new results.

\begin{lemma}\label{lem:bip-unbounded}
The class of $H$-free bipartite graphs has unbounded clique-width if $H\in \{2P_1+2P_2, 2P_1+P_4, 4P_1+P_2, 3P_2, 2P_3\}$.
\end{lemma}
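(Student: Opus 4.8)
The plan is to prove Lemma~\ref{lem:bip-unbounded} by reducing each of the five forbidden graphs to a situation covered by Lemma~\ref{l-bipclassS} (or its unlabelled companion Lemma~\ref{l-classS}), together with the three clique-width–preserving operations recorded as Facts~1--3. The key observation is that none of $2P_1+2P_2$, $2P_1+P_4$, $4P_1+P_2$, $3P_2$, $2P_3$ lies in the class $\mathcal S$: each has a connected component that is not a path or subdivided claw, or has too many components of the wrong type — more precisely, $2P_3$ and $3P_2$ have a component structure incompatible with $\mathcal S$ only after we take bipartite complements, while $2P_1+2P_2$, $2P_1+P_4$, $4P_1+P_2$ fail outright since $\mathcal S$ forbids isolated vertices alongside other components in the way these graphs have them. (Recall $\mathcal S$ allows $P_1$ as a component, so one must be careful: $kP_1$ alone is in $\mathcal S$, but a wall is not, and crucially the relevant obstruction graphs — $k$-subdivided walls — avoid being induced subgraphs of these $H$'s.)

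The concrete strategy for each case is the same two-step argument. First, if $H\notin\mathcal S$, then by Lemma~\ref{l-classS} the class of $H$-free bipartite graphs already has unbounded clique-width, and we are done. This should dispatch $2P_1+2P_2$, $2P_1+P_4$, and $4P_1+P_2$ directly, since each contains a component ($2P_2$, $P_4$ with extra isolated vertices, $P_2$ with four isolated vertices) configuration that is not of the form permitted in $\mathcal S$ — here I need to double-check the exact definition: $\mathcal S$ is the set of graphs \emph{all} of whose components are paths or subdivided claws, so $2P_1+2P_2$, $2P_1+P_4$, $4P_1+P_2 \in \mathcal S$ actually, since $P_1,P_2,P_4$ are all paths. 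So Lemma~\ref{l-classS} alone is \emph{not} enough, and the real work is the bipartite-complement trick. For each $H$, pick the natural bipartition and let $F$ be the bipartite complement of $H$ with respect to it; then apply Lemma~\ref{l-bipclassS} with $p=1$, checking that $F\notin\mathcal S$. For instance, the bipartite complement of $P_2$ in a $(1,1)$-bipartition is $2P_1$, but adding isolated vertices on both sides changes the complement into something dense (a complete bipartite graph minus a perfect matching plus extra join edges), which contains large subdivided walls and hence is not in $\mathcal S$. Similarly $3P_2$ bipartite-complements to the bipartite complement of a perfect matching $K_{3,3}$ minus a matching, which is $C_6$ — and $2P_3$ to an appropriate graph — and one checks these or their further bipartite complements are not in $\mathcal S$.

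The cleanest route: for each graph $H$ in the list, choose a bipartition $(B,W)$, and show that either $H$ (labelled that way) or its bipartite complement $F$ contains, as an induced subgraph, a graph from which one can extract (via edge subdivisions or vertex deletions, using Facts~1 and~3) the $k$-subdivided walls of Lemma~\ref{l-walls}; then invoke Lemma~\ref{l-bipclassS}. Actually the statement of Lemma~\ref{l-bipclassS} already packages exactly this: it suffices to verify $F\notin\mathcal S$ for the chosen bipartition. So the proof reduces entirely to: (1) fix a convenient black-and-white labelling of each of the five graphs; (2) compute its bipartite complement $F$; (3) observe $F$ has a component that is neither a path nor a subdivided claw (e.g. contains a cycle, or a vertex of degree $\geq 4$, or two vertices of degree $3$); (4) conclude by Lemma~\ref{l-bipclassS}. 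The one case genuinely attributed to Lozin--Volz (presumably $3P_2$ or $2P_3$) can just be cited.

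The main obstacle I anticipate is case $2P_3$ (and to a lesser extent $3P_2$), because $P_3$ has a canonical bipartition into sizes $1$ and $2$, and its bipartite complement within that $(1,2)$-block is again small; one must be attentive to how isolated-vertex-free components interact across the two copies when taking the complement of the \emph{disjoint union}, since the bipartite complementation acts between the \emph{combined} black side and the \emph{combined} white side, creating a dense bipartite graph between the two copies. So the bipartite complement of $2P_3$ under the natural labelling is genuinely a connected graph on $6$ vertices containing cycles, which is easily seen not to be in $\mathcal S$ — but verifying it contains the right obstruction (or simply that it is outside $\mathcal S$, which is all Lemma~\ref{l-bipclassS} needs) requires drawing the graph carefully. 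I would handle this by explicitly listing vertices and edges of each bipartite complement, as the excerpt already does in its informal remarks (``$2P_2$ has $C_4$ as its only bipartite complement''), and then simply pointing out the presence of an induced cycle or a degree-$4$ vertex. Everything else is bookkeeping.
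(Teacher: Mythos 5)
Your overall instinct---use bipartite complementation together with Lemma~\ref{l-bipclassS}---is the right one and matches the paper for four of the five graphs, but your plan to ``apply Lemma~\ref{l-bipclassS} with $p=1$, checking that $F\notin\mathcal S$ for the chosen bipartition'' has a genuine gap. With $p=1$ and a single labelling $\ell$, Lemma~\ref{l-bipclassS} only yields that the class of \emph{strongly $H^\ell$-free} bipartite graphs has unbounded clique-width. By Lemma~\ref{l-observation}(i) the $H$-free bipartite graphs form a (possibly proper) \emph{subclass} of that class, and unboundedness of a superclass says nothing about a subclass. To conclude anything about the $H$-free class you must use Lemmas~\ref{l-equivalent} and~\ref{l-nonequi} to identify it with the class of strongly $(H^{\ell_1},\ldots,H^{\ell_p})$-free graphs where $\ell_1,\ldots,\ell_p$ is a complete list of non-equivalent labellings, and then Lemma~\ref{l-bipclassS} demands that $F_i\notin\mathcal S$ for \emph{every} $i$. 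This is real work: $2P_1+2P_2$ and $2P_1+P_4$ each have two non-equivalent labellings and $4P_1+P_2$ has three, and the paper exhibits a $C_4$ or $K_{1,4}$ in each of the corresponding complements. Choosing one ``natural'' bipartition does not suffice.

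The case $H=2P_3$ exposes a second, concrete error. You assert that the bipartite complement of $2P_3$ under the natural labelling is ``a connected graph on $6$ vertices containing cycles.'' It is not: with both $P_3$-centres on the same side, the bipartite complement of $2P_3$ is again $2P_3$, which lies in $\mathcal S$. Consequently the hypothesis of Lemma~\ref{l-bipclassS} fails for $2P_3$ (one labelling has $H\in\mathcal S$ and $F\in\mathcal S$ simultaneously), and no choice of labellings rescues the complement argument. This is precisely why the paper handles $2P_3$ by citing the direct construction of Lozin and Volz rather than by complementation. Your fallback of ``just citing'' that case happens to be what is needed, but your stated justification for why the complement route would also work there is false, and the same care about checking \emph{all} labellings is what separates a correct proof from one that proves a strictly weaker statement.
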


\begin{proof}
Lozin and Voltz~\cite{LV08} showed that $2P_3$-free bipartite graphs have unbounded clique-width.
Let $H\in \{2P_1+2P_2, 2P_1+P_4$, $4P_1+P_2, 3P_2\}$, and let $\{H^{\ell_1},\ldots,H^{\ell_p}\}$ be the set of all non-equivalent labelled bipartite graphs isomorphic to $H$.
For $i=1,\ldots,p$, let $F_i$ denote the bipartite complement of $H$ with respect to $(B^{\ell_i}_{H},W^{\ell_i}_{H})$.
We will show that every $F_i$ does not belong to ${\cal S}$.
Then, by Lemma~\ref{l-bipclassS}  the class of strongly $(H_1^{\ell_1},\ldots,H_p^{\ell_p})$-free bipartite graphs has unbounded clique-width.
Because  a bipartite graph is $H$-free if and only if it is strongly $(H_1^{\ell_1},\ldots,H_p^{\ell_p})$-free (by Lemmas~\ref{l-equivalent} and~\ref{l-nonequi}),
 this means that the class of $H$-free bipartite graphs has unbounded clique-width.

Suppose $H\in \{2P_1+2P_2,2P_1+P_4\}$. Let $V_H=\{x_1,\ldots,x_6\}$ with $E_H=\{x_1x_2,x_3x_4\}$ if $H=2P_1+2P_2$ and 
$E_H=\{x_1x_2,x_2x_3,x_3x_4\}$ if $H=2P_1+P_4$. Then $H$ has only two non-equivalent black-and-white labellings. 
We may assume without loss of generality that one of these two labellings colours $x_1,x_3,x_5,x_6$
black and $x_2,x_4$ white, whereas the other one colours $x_1,x_3,x_5$ black and $x_2,x_4$, $x_6$ white.
Let $F_1$ and $F_2$ be the bipartite complements corresponding to the first and second labellings, respectively. 
The vertices $x_2,x_4,x_5,x_6$ induce a $C_4$ in $F_1$, whereas
the vertices $x_1,x_4,x_5,x_6$ induce a $C_4$ in $F_2$. Hence, $F_1$ and $F_2$ do not belong to ${\cal S}$.

Suppose $H=4P_1+P_2$. Let $V_H=\{x_1,\ldots,x_6\}$ and $E_H=\{x_1x_2\}$.
Then $H$ has three non-equivalent black-and-white labellings.
We may assume without loss of generality that the first one colours $x_1,x_3,x_4,x_5,x_6$ black and $x_2$ white, 
the second one colours $x_1,x_3,x_4,x_5$ black and $x_2,x_6$ white, and the third one colours $x_1,x_3,x_4$ black and $x_2,x_5,x_6$ white.
Let $F_1,F_2,F_3$ denote the corresponding bipartite complements.
The vertices $x_2,\ldots,x_6$ induce a $K_{1,4}$ in $F_1$.
The vertices $x_2,x_3,x_4,x_6$ induce a $C_4$ in $F_2$ and $F_3$.
Hence, none of $F_1,F_2,F_3$ belongs to ${\cal S}$.

Suppose $H=3P_2$. Let $V_H=\{x_1,\ldots,x_6\}$ and $E_H=\{x_1x_2,x_3x_4,x_5x_6\}$. Let $\ell$ be a black-and-white labelling of $H$ that colours $x_1,x_3,x_5$ black
and $x_2,x_4,x_6$ white. Then any any other labelling $\ell^*$ of $H$ is isomorphic to $\ell$. The bipartite complement of $H$ with respect to $(B_H^\ell,W_H^\ell)$ is isomorphic to $C_6$, which does not belong to ${\cal S}$.
\qed
\end{proof}

We will also need the following lemma.

\begin{lemma}\label{l-addit}
 Let $H\in {\cal S}$. Then $H$ is 
 $(2P_1+2P_2, 2P_1+P_4, 4P_1+P_2,3P_2, 2P_3$)-free
if and only if $H=sP_1$ for some integer $s\geq 1$ or $H$ is an induced subgraph of one of the graphs in 
$\{K_{1,3}+3P_1,K_{1,3}+P_2, P_1+S_{1,1,3}, S_{1,2,3}\}$.
\end{lemma}

\begin{proof}
Let $H\in {\cal S}$.
First suppose that $H=sP_1$ for some integer $s\geq 1$ or that $H$ is an induced subgraph of one of the graphs in 
$\{K_{1,3}+3P_1,K_{1,3}+P_2, P_1+S_{1,1,3}, S_{1,2,3}\}$.
It is readily seen that $H$ is  $(2P_1+2P_2, 2P_1+P_4, 4P_1+P_2,3P_2, 2P_3$)-free.

Now suppose that $H$ is  $(2P_1+2P_2, 2P_1+P_4, 4P_1+P_2,3P_2, 2P_3$)-free.
First suppose that $H$ has at least five connected components.
Because $H$ is $(4P_1+P_2)$-free, we find that $H=sP_1$ for some integer $s\geq 5$.

Now suppose that $H$ has exactly four connected components. 
Because $H$ is $(2P_1+2P_2)$-free, $H=3P_1+D$, where $D$ may have more than one edge. 
Because $H$ is $(2P_1+2P_2,2P_1+P_4)$-free, $D$ is $(2P_2,P_4)$-free.
As $H\in {\cal S}$, this means that $D$ is isomorphic to one of $\{K_{1,3},P_1,P_2,P_3\}$.
Hence, $H$ is an induced subgraph of $K_{1,3}+3P_1$.

Now suppose that $H$ has exactly three connected components, say $H=D_1+D_2+D_3$ with $|V_{D_1}|\leq |V_{D_2}| \leq |V_{D_3}|$.
Because $H$ is $3P_2$-free, we may assume without loss of generality that $D_1=P_1$.
First suppose that $D_3$ is $P_3$-free. Then, as $H\in {\cal S}$, we find that $H$ is an induced subgraph of $P_1+2P_2$, which is an induced subgraph of $S_{1,2,3}$.
Now suppose that $D_3$ is not $P_3$-free. 
Because $H=(2P_1+P_4)$-free, $D_3$ is $P_4$-free.
As $H\in {\cal S}$, this means that $D_3\in \{K_{1,3},P_3\}$.
Moreover, as  $H$ is $2P_3$-free and $H\in {\cal S}$, we find that
$D_2\in \{P_1,P_2\}$. Because $H$ is $(4P_1+P_2)$-free, the combination $D_2=P_2$ and $D_3=K_{1,3}$ is not possible.
Hence, $H$ belongs to $\{K_{1,3}+2P_1, 2P_1+P_3,P_1+P_2+P_3\}$, which means that $H$ is an induced subgraph of
$K_{1,3}+3P_1$ or of $S_{1,2,3}$.

Now suppose that $H$ has exactly two connected components, say $H=D_1+D_2$ with $|V_{D_1}|\leq |V_{D_2}|$.
First suppose that $D_2$ is $P_3$-free. 
Then, as $H\in {\cal S}$, we find that $H$ is an induced subgraph of $2P_2$, which is an induced subgraph of $S_{1,2,3}$.
Now suppose that $D_2$ is not $P_3$-free.
Because $H$ is $2P_3$-free and $H\in {\cal S}$, we find that $D_1\in \{P_1,P_2\}$ and that $D_2$ is either a path or a subdivided claw.
Because $H$ is $(2P_1+P_4)$-free, $D_2$ is $P_6$-free. 
Suppose that $D_2$ is a path. Then $D_2\ssi P_5$. If $D_2=P_5$ then $D_1=P_1$, as $H$ is $3P_2$-free. 
Hence, we find that $H\in \{P_1+P_3,P_1+P_4,P_1+P_5,P_2+P_3,P_2+P_4\}$, which means that $H$ is an induced subgraph of $P_1+S_{1,1,3}$ or of $S_{1,2,3}$.
Suppose that $D_2$ is a subdivided claw, say $D_2=S_{a,b,c}$ for some $1\leq a\leq b\leq c$. Then, because $H$ is $(2P_1+2P_2)$-free, $a=b=1$.
Because $H$ is $(2P_1+P_4)$-free, $c\leq 3$. Moreover, if $2\leq c\leq 3$ then $D_1=P_1$, as $H=(2P_1+2P_2)$-free. Hence, we find that 
$H\in \{K_{1,3}+P_1,K_{1,3}+P_2,P_1+S_{1,1,2}, P_1+S_{1,1,3}\}$,
which means that $H$ is an induced subgraph of $K_{1,3}+P_2$ or of $P_1+S_{1,1,3}$.

Now suppose that $H$ has exactly one connected component. As $H\in {\cal S}$, we find that $H$ is either a path or a subdivided claw.
If $H$ is a path then, as $H$ is $2P_3$-free, $H$ is an induced subgraph of $P_6$, which means that $H\ssi S_{1,2,3}$.
Suppose that $H$ is a subdivided claw, say $H=S_{a,b,c}$ for some $1\leq a\leq b\leq c$.
Because $H$ is $3P_2$-free, we find that $a=1$.
Because $H$ is $2P_3$-free, we find that $b\leq 2$ and that $c\leq 3$.
 Hence, $H$ is an induced subgraph of $S_{1,2,3}$.
This completes the proof.\qed
\end{proof}

The last lemma we need before proving the main results of this paper is the following one (we use it several times in the proof of Theorem~\ref{thm:bipweak}). 

\begin{lemma}\label{l-last}
Let $H^\ell$ be a labelled bipartite graph. 
The class of weakly $H^\ell$-free bipartite graphs has unbounded clique-width in both of the following cases:
\begin{itemize}
\item [(i)] $H^\ell$ contains a vertex of degree at least~$3$, or
\item [(ii)] $H^\ell$ contains four independent vertices, not all of the same colour.
\end{itemize}
\end{lemma}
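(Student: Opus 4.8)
The plan is to reduce both cases to Lemma~\ref{l-walls} (or rather to the class $\mathcal{S}$ via Lemma~\ref{l-classS}) by exhibiting, for each of the two hypotheses, a family of $k$-subdivided walls that are all weakly $H^\ell$-free. Recall that a weakly $H^\ell$-free graph need only admit \emph{one} bipartition in which $H^\ell$ does not embed as a labelled induced subgraph; so to show a graph $G$ is weakly $H^\ell$-free it suffices to pick a convenient bipartition $(B_G,W_G)$ of $G$ and verify $H^\ell \not\li (B_G,W_G,E_G)$.

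For case (i): since walls have maximum degree~$3$, any sufficiently subdivided wall is $2$-connected, bipartite, and has the property that in its unique bipartition every vertex of degree~$3$ is surrounded (at distance~$1$) only by degree-$2$ vertices, after subdividing enough times. The key observation I would exploit is that in a $k$-subdivided wall with $k\ge 1$, the two colour classes of the (unique up to swapping) bipartition are "locally homogeneous": a degree-$3$ vertex has all three neighbours in the opposite class, and conversely. If $H^\ell$ has a vertex $v$ of degree~$\ge 3$, say coloured black, then in any labelled-induced embedding of $H^\ell$ into a subdivided wall $W$, the image of $v$ must be a vertex of degree $\ge 3$ in $W$, hence a branch vertex, with its three neighbours in the white class — fine so far, but the point is that we can choose $k$ so large that the $H^\ell$ cannot embed at all for \emph{any} bipartition, because the only bipartition of a $k$-subdivided wall (it is connected) forces parities that $H^\ell$ violates once $k$ exceeds the diameter of $H$. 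Concretely: take $W$ to be a $k$-subdivided wall with $k > |V_H|$; then between any two branch vertices the path in $W$ has length $> |V_H|$, so $W$ contains no induced copy of $H$ at all when $H$ has a degree-$3$ vertex and more than one "piece", and when $H$ is a single subdivided claw one must argue a little more carefully — but in all cases $W$ is $H$-free, hence weakly $H^\ell$-free by Lemma~\ref{l-observation}(i) combined with Lemma~\ref{l-equivalent}, and the class of such walls has unbounded clique-width by Lemma~\ref{l-walls}. Actually the clean way: a $k$-subdivided wall has no vertex of degree~$\ge 4$ and, for $k\ge 1$, no induced $K_{1,3}$; so if $H^\ell$ has a vertex of degree~$\ge 3$ then $H \notin \mathcal{S}$ is false — instead one notes $H$ itself may be in $\mathcal{S}$, so the right move is: if $H^\ell$ has a degree-$\ge 3$ vertex then $H$ contains $K_{1,3}$ or has large degree, and a $1$-subdivided wall contains no induced $K_{1,3}$; thus every $1$-subdivided wall is $H$-free, hence weakly $H^\ell$-free, and these have unbounded clique-width.

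For case (ii): suppose $H^\ell$ has four pairwise non-adjacent vertices $a,b,c,d$ not all the same colour; without loss of generality $a$ is black and $b$ is white. The strategy is to produce subdivided walls $W$ such that \emph{every} bipartition of $W$ fails to contain $H^\ell$. Since a connected bipartite graph has a unique bipartition up to swapping the two classes, a $k$-subdivided wall $W$ (which is connected) has exactly two bipartitions, $(X,Y)$ and $(Y,X)$. So I must arrange that $H^\ell \not\li (X,Y,E_W)$ and $H^\ell \not\li (Y,X,E_W)$. Here I would use a colour-parity / counting argument: in a sufficiently subdivided wall the two colour classes $X$ and $Y$ can be made very unbalanced in a controlled way, or better, one can use that four independent vertices of $H$ not all the same colour means that in any embedding some independent set of $H$ must straddle both classes of $W$ in a way that conflicts with distances — since independent vertices in $H$ that receive different colours in $H^\ell$ must map to vertices in different classes of $W$, hence at odd distance, whereas subdividing $k$ times forces long even/odd paths. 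The cleanest route, I expect, is: choose $W$ to be a $k$-subdivided wall with $k$ large; argue $W$ has no four independent vertices with a "mixed" colour pattern realizing $H^\ell$, because any induced copy of $H$ in $W$ (if one exists at all) has all its vertices confined to a small ball, contradicting $k$ large — so again $W$ is outright $H$-free. Combined with Lemma~\ref{l-walls} this finishes the case.

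The main obstacle is handling the case where $H$ itself lies in $\mathcal{S}$ (so one cannot simply invoke Lemma~\ref{l-classS}) and is connected — e.g.\ $H = S_{1,2,3}$ has a degree-$3$ vertex, and subdivided walls might genuinely contain induced copies of small subdivided claws. In that situation the outright-$H$-free argument fails and one truly needs the weak freeness: one must pin down the \emph{labelled} structure, using that in a $k$-subdivided wall with $k \ge 1$ there is essentially one bipartition, and show that no matter which of its two orientations we take, the specific black/white pattern of $H^\ell$ (a degree-$3$ vertex with a prescribed colour, or the mixed independent set) cannot be matched — this is where a careful parity argument on path lengths in the wall, together with the fact that branch vertices of the wall lie in a fixed colour class, does the work. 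I would isolate this as the technical heart of the proof and treat the "$H \notin \mathcal{S}$" subcase as the easy one dispatched immediately by Lemmas~\ref{l-classS} and~\ref{l-observation}(i).
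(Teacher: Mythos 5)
Your strategy for both cases is to exhibit subdivided walls that are outright $H$-free (or otherwise weakly $H^\ell$-free), but the headline claim on which your case~(i) rests is false: a $1$-subdivided wall \emph{does} contain induced copies of $K_{1,3}$. Every branch vertex $u$ of the wall keeps degree~$3$ after subdivision, and its three neighbours are subdivision vertices that are pairwise non-adjacent, so $u$ together with them induces a claw; hence no $k$-subdivided wall is $K_{1,3}$-free, and the ``outright $H$-free'' route collapses already for the minimal instance $H=K_{1,3}$ of case~(i). You do concede at the end that for small subdivided claws one ``truly needs the weak freeness'' and a labelled argument, but you leave that argument --- which is the entire content of the lemma --- as an unproved ``technical heart''. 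The argument the paper actually uses is short: in the bipartition of a $1$-subdivided wall in which the degree-$2$ subdivision vertices form one colour class, no vertex of that class can be the centre of a claw, so the labelled claw whose centre carries that colour is not a labelled induced subgraph of that particular bipartition; one then orients the wall's (unique) bipartition according to the colour of the degree-$\geq 3$ vertex of $H^\ell$, and unboundedness follows from Lemma~\ref{l-walls} together with Lemma~\ref{l-nonequi}.

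Case~(ii) has a more fundamental gap: no wall-based construction can work when the four independent vertices split two black and two white. Every bipartition of every sufficiently large connected bipartite graph --- in particular of every $k$-subdivided wall --- contains two pairwise non-adjacent vertices in each colour class, forming a labelled $4P_1$ with the two-and-two pattern; so no such wall is weakly free of that labelled $4P_1$, and no parity or distance argument can change this. The paper handles this sub-case by an entirely different mechanism: the two-and-two labelling of $4P_1$ falls outside all bounded cases of Theorem~\ref{thm:bipvadim}, so already the class of \emph{strongly} free bipartite graphs for that labelled $4P_1$ has unbounded clique-width, and Lemma~\ref{l-observation}(ii) transfers this to the weakly free class. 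The three-and-one split is handled by taking the bipartite complement of the $1$-subdivided wall (which turns the avoided labelled claw into the avoided labelled $4P_1$) and invoking Fact~2. Both of these ingredients are absent from your proposal, so as written it does not establish the lemma.
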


\begin{proof}

Let $b_1$ be a black-and-white labelling of $4P_1$ that colours
three vertices black and one vertex white.
Let $b_2$ be a black-and-white labelling of $4P_1$
that colours two vertices black and two vertices white.
We show below that the class of weakly $H^\ell$-free bipartite graphs has unbounded clique-width if $H^\ell\in \{(K_{1,3})^b,(4P_1)^{b_2},(4P_1)^{b_3}\}$.
Then we are done by Lemma~\ref{l-nonequi}.

Consider a 1-subdivided wall $G'$ obtained from a wall $G$. Recall that 1-subdivided walls  are bipartite. Moreover, the vertices that were introduced when subdividing every edge 
of $G$ all  have degree~2 and form one class of a bipartition  $(B,W)$ of $G'$. Let this class be $B$. Then $(K_{1,3})^b$ is not a labelled induced subgraph of $(B,W,E_{G'})$.
Hence, $G'$ is weakly $(K_{1,3})^b$-free. This means that the class of weakly $(K_{1,3})^b$-free graphs contains the class of 1-subdivided walls. As such, it has unbounded clique-width 
by Lemma~\ref{l-walls}. 
The bipartite complement $G''$ of $G'$ with respect to $(B,W)$ is weakly $(4P_1)^{b_1}$-free, as $(K_{1,3})^{b}$ is the bipartite complement of $(4P_1)^{b_1}$ and $(K_{1,3})^{b}$ is not a labelled induced subgraph of $(B,W,E_{G'})$.
Hence, the class of weakly $(4P_1)^{b_1}$-free graphs has unbounded clique-width by Fact~2.
The class of weakly $(4P_1)^{b_2}$-free bipartite graphs has unbounded clique-width by Lemma~\ref{l-observation} and Theorem~\ref{thm:bipvadim}. 
\qed
\end{proof}

\subsection{The Proof of Theorem~\ref{thm:bipartite}}

\begin{proof}
We first deal with the bounded cases. 
First suppose $H=sP_1$ for some $s\geq 1$. Then any
$H$-free bipartite graph $G$ can have at most $s-1$ vertices in each partition class of any bipartition.
This means that the clique-width of $G$ is at most $2s-2$.  
Now suppose that $H\in 
\{K_{1,3}+3P_1,K_{1,3}+P_2, P_1+S_{1,1,3},S_{1,2,3}\}$. Then the claim follows from combining Lemma~\ref{l-observation} with Theorem~\ref{thm:bipvadim}.

We now deal with the unbounded cases.
Suppose  $H\neq sP_1$ for any $s \geq 1$ and that $H$ is not an induced subgraph of one of the graphs in 
$\{K_{1,3}+3P_1,K_{1,3}+P_2, P_1+S_{1,1,3}, S_{1,2,3}\}$.
Then by Lemma~\ref{l-addit}, either $H\notin {\cal S}$ or, $H$ is not
 $(2P_1+2P_2, 2P_1+P_4, 4P_1+P_2,3P_2, 2P_3$)-free.
Hence, the clique-width of the class of $H$-free bipartite graphs is unbounded by 
Lemmas~\ref{l-classS} and~\ref{lem:bip-unbounded}, respectively.\qed
\end{proof}

\subsection{The Proof of Theorem~\ref{thm:bipweak}}

\begin{proof}
We first consider the bounded cases.
First suppose $H^\ell=(sP_1)^b$ for some $s\geq 1$ (the $H^\ell=(sP_1)^{\overline{b}}$ case is equivalent). Then any weakly $H^\ell$-free bipartite graph has a bipartition $(B,W)$ with $|B|\leq s-1$.
Hence, the clique-width of such a graph is at most $s+1$ (first introduce the vertices of $B$ by using distinct labels, then use two more labels for the vertices of $W$, introducing them one-by-one).

Now suppose $H^\ell=(2P_1+P_3)^b$. Let $G$ be a weakly $H^\ell$-free bipartite graph.
Then $G$ has a bipartition $(B,W)$ such that $H^\ell$ is not a labelled induced subgraph of $(B,W,E_G)$.
By the previous paragraph, we may assume without loss of generality that $|B|\geq 4$ and $|W|\geq 1$.
Then every white vertex in $G$ is adjacent to either all but at most one black vertex or
non-adjacent to all but at most one black vertex. Let $W'$ be the set of white
vertices adjacent to all but at most one black vertex.
Apply a bipartite complementation between $W'$ and $B$.
The resulting graph is a disjoint union of stars, which have clique-width at most~2. Thus, by Fact~2, the class of weakly $H^\ell$-free
graphs has bounded clique-width. 

Before considering the case $H^\ell=(P_1+P_5)^b$, we first consider the case where
$H\ssi P_2+P_4$ or $H\ssi P_6$. We first assume that $H=P_2+P_4$ or $H=P_6$.
Then $H\ssi S_{1,2,3}$, which implies that that the class of $H$-free bipartite graphs has bounded clique-width by Theorem~\ref{thm:bipartite}.
All black-and-white labellings of $P_2+P_4$ are isomorphic.
Similarly, all black-and-white labellings of $P_6$ are isomorphic.
Hence, the class of $H$-free bipartite graphs coincides with the class of
weakly $H^\ell$-free graphs by Lemma~\ref{l-equivalent2}. We therefore conclude that the
latter class also has bounded clique-width.

Now let $H \ssi P_2+P_4$ or $H\ssi P_6$, but $H \not \in \{P_2+P_4,P_6\}$. 
Note that $P_2+P_4$ and $P_6$ have a unique labelling $b$ (up to isomorphism).
If $H^\ell$ is not a labelled induced subgraph of one of 
$\{(P_2+P_4)^b,P_6^b\}$
then $H$ must have two non-equivalent black-and-white
labellings. Since $H$ is a linear forest, it must have at least two components with an
odd number of vertices. Therefore $H \in \{2P_1,3P_1,P_1+P_3,2P_1+P_2\}$.
However, in all these cases, for any labelling~$\ell$ of $H$, $H^\ell \li P_6^b$
or $H^\ell\li (P_2+P_4)^b$. Therefore, if $H\ssi P_2+P_4$ or $H\ssi P_6$ then for any
labelling $\ell$ of $H$, the weakly $H^\ell$-free bipartite graphs are a subclass
of either the $P_6$-free or $(P_2+P_4)$-free bipartite graphs. In particular, this holds for $H^\ell=(P_1+2P_2)^b$ (we need
this observation for the following case).

Finally, suppose $H^\ell=(P_1+P_5)^b$. 
Let $G$ be a weakly $H^\ell$-free bipartite graph.
Then $G$ has a labelling $\ell^*$ such that $H^\ell$ is not a labelled induced subgraph of $(B_G^{\ell^*},W_G^{\ell^*},E_G)$.
If $|B_G^{\ell^*}|$ is even, then we delete a vertex of $B_G^{\ell^*}$. We may do this by Fact~1.
Hence $|B_G^{\ell^*}|$ may be assumed to be odd.
Let $X$ be the subset of $W_G^{\ell^*}$ that consists of all vertices that are  adjacent to less than half of the vertices of $B_G^{\ell^*}$.
We apply the bipartite complementation between $X$ and $B_G^{\ell^*}$. We may do this by Fact~2.
Let $G_1$ be the resulting bipartite graph, with bipartition classes $B_{G_1}^{\ell^*}=B_G^{\ell^*}$ and $W_{G_1}^{\ell^*}=W_G^{\ell^*}$.

Suppose $B_{G_1}^{\ell^*}$ contains three vertices $b_1,b_2,b_3$ and $W_{G_1}^{\ell^*}$ contains two vertices $w_1,w_2$ such that $G_1^{\ell^*}[b_1,b_2,b_3,w_1,w_2]$ is isomorphic to $(P_1+2P_2)^b$. 
By construction and because $|B_{G_1}^{\ell^*}|=|B_G^{\ell^*}|$ is odd, $w_1$ and $w_2$ have at least one common neighbour $b_4\in B_{G_1}^{\ell^*}$. 
Then $G_1^{\ell^*}[b_1,b_2,b_3,b_4,w_1,w_2]$ is isomorphic to $(P_1+P_5)^b$. However, then $G^{\ell^*}[b_1,b_2,b_3,b_4,w_1,w_2]$ is also isomorphic to $(P_1+P_5)^b$ (irrespective of whether $w_1$ or $w_2$ belong to $X$), which is a contradiction. 
We conclude that $G_1$ is weakly $(P_1+2P_2)^b$-free. 
As observed above, this means that $G_1$ has bounded clique-width. Hence $G$ has bounded clique-width.

\medskip
\noindent
We now consider the unbounded cases. 
Let $H^\ell$
be a labelled bipartite graph that is not isomorphic to one of the (bounded) cases considered already.
Suppose that $H$ contains a cycle or an induced subgraph isomorphic to $2P_3$.
Then the class of weakly $H^\ell$-free graphs has unbounded clique-width by combining Lemma~\ref{l-observation} with Theorem~\ref{thm:bipartite}.
Suppose that $H$ contains a vertex of degree at least~3.
Then the class of weakly $H^\ell$-free bipartite graphs has unbounded clique-width by Lemma~\ref{l-last}(i).
It remains to consider the case when $H=sP_1+tP_2+P_r$ for some constants $1\leq r\leq 6$, $s\geq 0$ and $t\geq 0$, where $\max\{s,t\}\geq 1$ (as $H$
is not an induced subgraph of $P_6$).

Suppose $5\leq r\leq 6$. Assume without loss of generality that three vertices of the copy of $P_r$ in $H^\ell$ are coloured black. If $r=6$ or $t\geq 1$ or some copy~$P_1$ in $H^\ell$ is coloured white, or two copies of $P_1$ in $H^\ell$ are coloured black, then we can apply Lemma~\ref{l-last}(ii). Hence, $H^\ell=(P_1+P_5)^b$, which is not possible by assumption.
 
Suppose $r=4$. If two vertices in the induced subgraph of $H^\ell$ isomorphic to $sP_1+tP_2$ have the same colour then we can apply Lemma~\ref{l-last}(ii). 
Hence we may assume that $s\leq 2$ and $t\leq 1$, and moreover that $s=0$ if $t=1$. Also we would have $H\ssi P_2+P_4$ if $s=0$ and $t=1$ or if $s=1$ and $t=0$.
Hence, it remains to consider the case $s=2$ and $t=0$, such that one copy of $P_1$ is coloured black and the other one white.
In that case, we may apply Lemma~\ref{l-last}(ii).

Suppose  $r=3$. Assume without loss of generality that the two vertices of the copy of $P_3$ in $H^\ell$ are coloured black.
Recall that $s\geq 1$ or $t\geq 1$. If $t\geq 2$, then we can apply Lemma~\ref{l-last}(ii). Suppose $t=1$.
The $s=0$ otherwise $H^\ell$ would contain an induced $4P_1$ in which not all the vertices are the same colour, in which case we could apply Lemma~\ref{l-last}(ii).
However, this means that $H$ is an induced subgraph of $P_2+P_4$. 
Now suppose $t=0$. Then $s\geq 2$, as otherwise $H$ is an induced subgraph of $P_2+P_4$.
If $s\geq 3$ then $H^\ell$ contains an induced $4P_1$ in which not all the vertices are the same
 colour, in which case we apply Lemma~\ref{l-last}(ii).
Hence, $s=2$ and both copies are coloured black (otherwise we apply Lemma~\ref{l-last}(ii)). However, in this case $H^\ell$ is isomorphic to $(2P_1+P_3)^b$, which is not possible by assumption.

Finally suppose that $r\leq 2$. Then we may write $H=sP_1+tP_2$ instead.
We must have $s+t \geq 4$ or $t \geq 3$, otherwise $H$ would be an induced subgraph of $P_2+P_4$ or $P_6$. If $t=0$ then since $H^\ell\neq (sP_1)^b$ and $H^\ell\neq (sP_1)^{\overline{b}}$ we can find four copies of $P_1$ in $H$ that are not all of the same colour and apply Lemma~\ref{l-last}(ii). If $t\geq1, s+t \geq 4$, we can also find four copies of $P_1$ that are not all of the same colour and apply Lemma~\ref{l-last}(ii). 
Finally, suppose $s=0,t=3$. In this case we combine Lemmas~\ref{l-observation} and~\ref{lem:bip-unbounded}.
This completes the proof.\qed
\end{proof}

\section{Conclusions}\label{s-con}

We have completely determined those bipartite graphs $H$ for which the class of $H$-free bipartite graphs has bounded clique-width.
We also characterized exactly those labelled bipartite graphs $H$ for which the class of 
weakly $H$-free bipartite graphs has bounded clique-width. These results complement the known characterization of Lozin and Volz~\cite{LV08}
for strongly $H$-free bipartite graphs. 
A natural direction for further research would be to characterize, for each of the three notions of $H$-freeness, 
the clique-width of classes of ${\cal H}$-free bipartite graphs when ${\cal H}$ is a set containing at least~2 graphs.
 In a follow-up paper~\cite{DP}, we apply our results for $H$-free bipartite graphs to determine classes of $(H_1,H_2)$-free
(general) graphs of bounded and unbounded clique-width. 

\bibliographystyle{abbrv}

\bibliography{mybib}

\end{document}